\title{Space Improvements and  Equivalences in a Functional Core Language}
\author{Manfred Schmidt-Schau{\ss}\thanks{supported by the Deutsche Forschungsgemeinschaft (DFG) under grant SCHM 986/11-1.}
\institute{Goethe-University\\Frankfurt am Main}
\email{schauss@ki.cs.uni-frankfurt.de}
\and
 Nils Dallmeyer\footnotemark[1]
\institute{Goethe-University\\Frankfurt am Main}
\email{dallmeyer@ki.cs.uni-frankfurt.de} }
\newcommand{\ignore}[1]{}
\newcommand{\ignoredavid}[1]{}
\newcommand{\paper}[1]{}
\newcommand{\mln}{\ensuremath{\text{\normalfont\ttfamily mln}}}
\newcommand{\size}{\mathtt{size}}
\newcommand{\rln}{{\tt rln}}
\newcommand{\rlnall}{{\tt rlnall}}
\newcommand{\ari}{\mathit{ar}}
\newcommand{\tletrx}[2]{({\tletr~#1~\tin~#2})}
\newcommand{\tletr}{{\tt letrec}}
\newcommand{\tletrec}{\tletr}
\newcommand{\tlet}{{\tt let}}
\newcommand{\tin}{{\tt in}}
\newcommand{\tif}{{\tt if}}
\newcommand{\tthen}{{\tt then}}
\newcommand{\telse}{{\tt else}}
\newcommand{\ttrue}{{\tt True}}
\newcommand{\tnil}{{\tt Nil}}
\newcommand{\tcase}{{\tt case}}
\newcommand{\tof}{{\tt of}}
\newcommand{\tseq}{{\tt seq}}
\newcommand{\iEnv}{{\mathit{E}}}
\newcommand{\casepf}{\,\texttt{->}\,}
\newcommand{\LR}{\mathit{LR}}
\newcommand{\LRP}{\mathit{LRP}}
\newcommand{\FV}{{\mathit{FV}}}
\newcommand{\mustdiv}{{\uparrow}}
\newcommand{\maycon}{{\downarrow}}
\newcommand{\bchainGen}[6]{\{{#1_{#2}=#3_{#4}}\}_{i=#5}^{#6}}
\newcommand{\bchainN}[3]{\bchainGen{#1}{i}{#2}{i}{1}{#3}}
\newcommand{\eg}{{{e.g.}}}
\newcommand{\tletrxnk}[2]{\tletr~#1 ~{\tt in}~#2}
\newcommand{\bbbR}{\mathbb{R}}
\newenvironment{proof*}{{\it Proof.}}{}
\newtheorem{theorem}{Theorem}[section]
\newtheorem{lemma}[theorem]{Lemma} 
\newtheorem{example}[theorem]{Example} 
\newtheorem{definition}[theorem]{Definition } 
\newtheorem{remark}[theorem]{Remark} 
\newtheorem{proposition}[theorem]{Proposition}
\newtheorem{corollary}[theorem]{Corollary} 
\begin{document}
\maketitle
\begin{abstract}
We explore space improvements in $\LRP$, a polymorphically typed call-by-need functional core language. 
A relaxed space measure is chosen for the maximal size usage during an evaluation. It
 abstracts from the details of the implementation via abstract machines, but 
it takes garbage collection into account and thus can be seen as a  realistic approximation of space usage.
 The results are: a context lemma for space improving translations and for space equivalences; 
 all but one reduction rule of the calculus are shown to be space improvements,  
 and for the exceptional one we show bounds on the space increase.
 Several further program transformations are shown to be space improvements or space equivalences,
 in particular the translation into machine expressions is a  space equivalence.
 We also classify certain space-worsening transformations as space-leaks or as space-safe.
 These results are a step forward in making predictions about the change in runtime space behavior of optimizing transformations 
 in call-by-need functional languages.
\end{abstract}


\newcommand{\ignoreForAbstract}[1]{}

\newcommand{\itspace}{\mathit{space}}
\newcommand{\spmax}{\mathit{spmax}}
\newcommand{\talt}{{\tt alt}}
\newcommand{\LRPgc}{\mathit{LRPgc}}
\newcommand{\LRgc}{\mathit{LRgc}}

\newcommand{\foldl}{{\tt foldl}}
\newcommand{\foldls}{{\tt foldl'}}
\newcommand{\foldr}{{\tt foldr}}

\newcommand{\itsumspace}{\mathit{sumspace}}
\newcommand{\spsum}{\mathit{spsum}}
\newcommand{\LCSC}{\mathit{LCSC}}
\newcommand{\Red}{\mathit{Red}}
\newcommand{\WHNF}{\mathit{WHNF}}
\newcommand{\synsize}{\mathtt{synsize}}
\newcommand{\T}{{T}}
\newcommand{\SC}{{S}}
\newcommand{\TT}{\T\T}
\newcommand{\seq}{{\tt seq}\ }
\newcommand{\nogc}{\mathit{LRPgc}}

\newcommand{\ltop}{\texttt{top}}
\newcommand{\lsub}{\texttt{sub}}
\newcommand{\lvis}{\texttt{vis}}
\newcommand{\lnontarg}{\text{nontarg}}
\newcommand{\lvee}{\hspace*{1.25pt}\vee\hspace*{1.25pt}}

\newcommand{\maydiv}{{\uparrow}}
\newcommand{\maycongc}{{\downarrow}_{\nogc}}

\newcommand{\AALL}{\mathfrak{A}}
\newcommand{\rlnnull}{\rln_{\AALL}}
\newcommand{\mlnnull}{\mln_{\AALL}}
\newcommand{\gceq}{{gc{=}}}  
\newcommand{\alts}{\mathit{alts}}  

\newcommand{\tfoldl}{{\tt foldl}}
\newcommand{\hlinep}{\\[-3mm]\hline\\[-3mm]}

%
%
%
%
%

\section{Introduction}

The focus of this paper is on providing methods for analyzing optimizations for call-by-need functional languages. 
Haskell \cite{haskell:2010,haskell-web:2016} is a functional programming language that uses lazy evaluation, and employs a polymorphic type system.
Programming in Haskell is declarative, which avoids overspecifying imperative details of the actions at runtime. 
Together with the type system this leads to a compact style of high level programming and avoids several types of errors.

The declarative features must be complemented with a more sophisticated compiler including  optimization methods and procedures. 
Declarativeness in connection with lazy evaluation (which is call-by-need \cite{ariola:95,moran-sands-carlsson:99} as a sharing variant of call-by-name) gives a lot of freedom to the exact execution 
and is accompanied by a semantically 
founded notion of correctness of the compilation. Compilation is usually a process that translates the surface program into a core language, 
where the optimization process can be understood as a sequence of transformations producing a final program.

Evaluation of a program or of an expression in a lazily evaluating functional  language is connected with variations in the evaluation sequences of  
the expressions in function bodies, depending on the arguments.  
Optimization exploits this and usually leads to faster evaluation. 
The easy-to-grasp notion of time improvements is contrasted by an opaque behavior of the evaluation w.r.t.\ space usage, 
which in the worst case may lead to space leaks  (high space usage during evaluation, which perhaps could be avoided by correctly transforming the program before evaluation).
Programmers may experience space leaks as unpredictability of space usage, 
generating rumors like ``Haskell's space 
usage prediction is a black art'' and in fact a loss of trust into the optimization. 
\cite{gustavsson-sands:99,gustavsson-sands:01,bakewell-runciman-ppdp:2000} observed that semantically correct modifications of the sequence of evaluation 
(for example due to strictness information) may have a dramatic effect on space usage, where an example is $(\texttt{head}~ xs)~ \texttt{eqBool}~ (\texttt{last}~ xs)$  vs. 
$(\texttt{last}~ xs)~\texttt{eqBool}~ (\texttt{head}~ xs)$ where $xs$ is bound to an expression that generates a long list of Booleans (using the  Haskell-conventions).

Early work on space improvements by Gustavsson and Sands \cite{gustavsson-sands:99,gustavsson-sands:01} provides deep insights and founded methods     
to analyze the dynamic space usage of programs in call-by-need functional languages. Our work is a reconsideration of the same issues, but there are some 
 differences:  their calculus has a restricted syntax (for example the arguments of function calls must be variables), whereas our calculus is unrestricted; 
 they investigate an untyped calculus, whereas we investigate a typed calculus. 
 Measuring space is also slightly different: whereas \cite{gustavsson-sands:99,gustavsson-sands:01} counts only the heap bindings, we count the whole expression, 
 but instead omit parts of the structure 
 (for example variables are not counted). The difference in  space measuring appears to be arbitrary, however,  our measure turns out to ignore the right amount of 
 noise and subtleties of space behavior, but nevertheless sufficiently models the reality, and leads to general and good estimations.
 


%
  %
The focus of this paper is to contribute to a better understanding of the space usage of lazy functional languages and to enable tools
for a  better predictability of space requirements. The approach is to analyze a polymorphically typed and call-by-need
 functional core language 
$\LRP$
that is a lambda calculus extended with the constructs  letrec, case, constructors, seq, type-abstraction, and with call-by-need evaluation.
 Call-by-need evaluation has a sharing regime and due to the recursive bindings by a letrec, in fact a sharing graph is used.
   Focusing on space usage enforces to  include garbage collection 
 into the model, i.e. into the core language. 
 This model is our calculus $\LRPgc$.

 The {\bf contributions and results}\ of this paper are: a definition (Def.\ \ref{def:space-improvement}) of the space measure $\spmax$ as an abstract version of the maximally 
 used space by an abstract machine during an evaluation, and a corresponding definition of transformations to be max-space-improvements 
 or -equivalences, where the criterion is that this holds in every context. 
   A context lemma  (Prop.\ \ref{lemma:context-sp-impr}) is proved that eases proofs of transformations being space improvements or equivalences. 
 The main result is a classification in Sect. \ref{sec:space-safe-transformations}   
  of the rules of the calculus (but one)
 used as transformations, and 
  of further transformations as max-space improvements
 and/or max-space equivalences, or as increasing max-space, or even as space-leaks.  
 These results   
 imply that the transformation into machine expressions keeps the max-space usage 
  which also holds for the evaluations on the abstract machine.  
  We  also classify    
       some space-worsening transformations as well-behaved (space-safe up to) or as space-leaks.
  We also argue that the typed calculus has more improvements than its untyped version.
   This is a contribution to predicting the space behavior of optimizing
 transformations, which in the future may lead also to a better control of powerful, but space-hazardous, time-optimizing transformations.
 
  
We discuss some  {\bf previous work} on time and space behavior for call-by-need functional languages.  
 Examples of research on 
the correctness of program transformations are in ~\cite{ghcinliner02,johann-voigtlaender:06,schmidt-schauss-schuetz-sabel:08},  
 examples of the use of transformations in optimization in functional languages are in \cite{peyton-santos:98,Sculthorpe-Farmer-Gill:13:HERMIT}.
 A theory of (time) optimizations of call-by-need
functional languages was started in \cite{moran-sands:99} for a call-by-need  higher order language, 
also based on a variant of Sestoft's abstract machine \cite{sestoft:97}.
An example transformation with a high  potential to
improve efficiency is common subexpression elimination, which  is considered in \cite{moran-sands:99}, but not proved to be a time
improvement (but it is conjectured), and which  is  proved correct in \cite{schmidtschauss-sabel-PPDP:2015}, and proved in this paper as space leak.
 Hackett and Hutton \cite{hackett-hutton:14} applied the improvement theory of \cite{moran-sands:99}
to argue that optimizations are indeed improvements, with a particular focus on  (typed) worker/wrapper transformations 
(see e.g. \cite{birdbook:14} for more examples).
The work of \cite{hackett-hutton:14} uses the same call-by-need abstract machine as \cite{moran-sands:99} with a slightly modified measure
for the improvement relation.
Further work that analyses space-usage of a lazy functional language is \cite{bakewell-runciman-ppdp:2000}, for a language without letrec and using  a term graph model,
 and comparing different evaluators.

The {\bf structure of the paper} is to first define the calculi $\LRP$ in Sect.\  \ref{subsec:lrp-def} and a variant  $\LRPgc$ with garbage collection
 in Sect.\  \ref{sec:calculi-gc}.
Sect.\   \ref{sec:space-improvements} defines space improvements and contains the  context lemmata.
Sect.\   \ref{sec:space-safe-transformations} discusses space-safeness and space-leaks, and
   contains a detailed treatment of space improving transformations, and discusses specific examples.
 Sect.\   \ref{sec:inlining} contains experiments measuring space- and time-usage for an inlining transformation, 
 which cannot be derived from the current theory. 
Missing  explanations, arguments and proofs can be found in the technical report \cite{schmidt-schauss-dallmeyer:2017:frank-57}.   

\begin{figure*}[t]   
 \fbox{ 
\begin{minipage}{.98\textwidth}
\noindent{\bf Syntax of expressions and types:}
Let type variables $a,a_i\in\mathit{TVar}$ and term variables $x_,x_i\in\mathit{Var}$. 
Every type constructor $K$ has an arity $ar(K)\geq 0$ and a finite set $D_K$ of data constructors $c_{K,i}\in D_K$ 
with an arity $ar(c_{K,i})\geq 0$. \\[2mm]
\textbf{Types} $Typ$ and polymorphic types $\mathit{PTyp}$ are defined as follows:\\
$\begin{array}{lll}
\tau\in\mathit{Typ}  & ::= & a \mid (\tau_1 \rightarrow \tau_2) \mid (K\ \tau_1\ \dots\ \tau_{ar(K)})\\
\rho\in\mathit{PTyp} & ::= & \tau \mid \forall a.\rho
\end{array}$\\[2mm]
\textbf{Expressions} $\mathit{Expr}$ are generated by this grammar with $n\geq 1$ and $k\geq 0$:\\
$\begin{array}{lll}
r,s,t\in\mathit{Expr} & ::= & u \mid x{::}\rho \mid (s\ \tau) \mid (s\ t) \mid (\seq s\ t) \mid (c_{K,i}{::}(\tau)\ s_1\ \dots\ s_{ar(c_{K,i})})  \\   & &
            \mid (\tletr~ x_1{::}\rho_1=s_1,\dots,x_n{::}\rho_n=s_n~\tin~ t)\\
                      & & \mid (\tcase_K\ s~\tof~\{(Pat_{K,1} \casepf t_1)\ \dots\ (Pat_{K,|D_K|} \casepf t_{|D_K|})\})\\
Pat_{K,i}           & ::= & (c_{K,i}::(\tau)\ (x_1::\tau_1)\ \dots\ (x_{ar(c_{K,i})}::\tau_{ar(c_{K,i})}))\\
u\in\mathit{PExpr}  & ::= & (\Lambda a_1.\Lambda a_2.\dots\Lambda a_k.\lambda x::\tau.s)
\end{array}$
\caption{Syntax of expressions and types of LRP\label{fig:lrp}}
\end{minipage}
 }
\end{figure*} 

\begin{figure*}[ht]  {\small 
 \fbox{ \hspace*{-1mm}
\begin{minipage}{.98\textwidth}
$\begin{array}{@{}ll@{}}
\text{(lbeta)}   & ((\lambda x.s)^\lsub\ r) \rightarrow (\tletr~x=r~\tin~ s)\\
\text{(Tbeta)}   & ((\Lambda a.u)^\lsub\ \tau) \rightarrow u[\tau/a]\\
\text{(cp-in)}   & (\tletr~x_1=v^\lsub,\{x_i=x_{i-1}\}_{i=2}^m,\iEnv~\tin~ C[x_m^\lvis]) 
                  \rightarrow (\tletr~x_1=v,\{x_i=x_{i-1}\}_{i=2}^m,\iEnv~\tin~ C[v])\\
                 & \qquad \text{where }v\text{ is a polymorphic abstraction}\\
\text{(cp-e)}    & (\tletr~x_1=v^\lsub,\{x_i=x_{i-1}\}_{i=2}^m,\iEnv,y=C[x_m^\lvis]~\tin~ r) \\
                  & \qquad  \rightarrow (\tletr~x_1=v,\{x_i=x_{i-1}\}_{i=2}^m,\iEnv,y=C[v]~\tin~ r)\\
                 & \qquad \text{where }v\text{ is a polymorphic abstraction}\\
\text{(llet-in)} & (\tletr~\iEnv_1~\tin~ (\tletr~\iEnv_2~\tin~ r)^\lsub) \rightarrow (\tletr~\iEnv_1,\iEnv_2~\tin~ r)\\
\text{(llet-e)}  & (\tletr~\iEnv_1,x=(\tletr~\iEnv_2~\tin~ t)^\lsub~\tin~ r) \rightarrow (\tletr~\iEnv_1,\iEnv_2,x=t~\tin~ r)\\
\text{(lapp)}    & ((\tletr~\iEnv~\tin~ t)^\lsub\ s) \rightarrow (\tletr~\iEnv~\tin~ (t\ s))\\
\text{(lcase)}   &   (\texttt{case}_K\ (\tletr~\iEnv~\tin~ t)^\lsub~\tof~ alts) \rightarrow (\tletr~\iEnv~\tin~ (\texttt{case}_K\ t~\tof~ alts))\\
\text{(lseq)}    & (\seq (\tletr~\iEnv~\tin~ s)^\lsub\ t) \rightarrow (\tletr~\iEnv~\tin~ (\seq s\ t))\\
\text{(seq-c)}   & (\seq v^\lsub\ t) \rightarrow t\qquad \text{if }v\text{ is a value}\\
\text{(seq-in)}  & (\tletr~x_1=(c\ \vv{s})^\lsub,\{x_i=x_{i-1}\}_{i=2}^m,\iEnv~\tin~ C[(\seq x_{m}^\lvis\ t)]) \\
                   & \qquad  \rightarrow (\tletr~x_1=(c\ \vv{s}),\{x_i=x_{i-1}\}_{i=2}^m,\iEnv~\tin~ C[t])\\
\text{(seq-e)}   & (\tletr~x_1=(c\ \vv{s})^\lsub,\{x_i=x_{i-1}\}_{i=2}^m,\iEnv,y=C[(\seq x_{m}^\lvis\ t)]~\tin~ r) \\
                  & \qquad   \rightarrow (\tletr~x_1=(c\ \vv{s}),\{x_i=x_{i-1}\}_{i=2}^m,\iEnv,y=C[t]~\tin~ r)\\
\text{(case-c)}  & (\tcase_K\ c^\lsub~\tof~ \{ \dots (c \casepf t) \dots\} ) \rightarrow t \quad \text{if }ar(c)=0\text{, otherwise:}\\
                  &  (\tcase_K\ (c\ \vv{s})^\lsub~\tof~ \{ \dots ((c\ \vv{y}) \casepf t) \dots \}) \rightarrow (\tletr~\{y_i=s_i\}_{i=1}^{ar(c)}~\tin~ t)\\
\text{(case-in)} &  (\tletr~x_1=c^\lsub, \{x_i=x_{i-1}\}_{i=2}^m,\iEnv~\tin~ C[(\tcase_K\ x_m^\lvis~\tof~ \{(c\casepf r)\dots\})])\\
                  &  \qquad \rightarrow (\tletr~x_1=c, \{x_i=x_{i-1}\}_{i=2}^m,\iEnv~\tin~ C[r])\qquad \text{ if }ar(c)=0;~\text{otherwise:}\\
                  &  (\tletr~x_1=(c\ \vv{t})^\lsub,\{x_i=x_{i-1}\}_{i=2}^m,\iEnv~\tin~ C[(\tcase_K\ x_m^\lvis~\tof~ \ \{((c\ \vv{z})\casepf r) \dots \})])\\
                 & \qquad \rightarrow (\tletr~x_1=(c\ \vv{y}),\{y_i=t_i\}_{i=1}^{ar(c)}, \{x_i=x_{i-1}\}_{i=2}^m,\iEnv~  \tin~ C[\tletr~\{z_i=y_i\}_{i=1}^{ar(c)}~\tin~ r])\\
\text{(case-e)}  & (\tletr~x_1=c^\lsub, \{x_i=x_{i-1}\}_{i=2}^m, u=C[(\tcase_K\ x_m^\lvis~\tof~ \{(c\casepf r_1)\dots\})], 
                   \enskip \iEnv~\tin~ r_2)\\
                  & \qquad \rightarrow (\tletr~x_1=c, \{x_i=x_{i-1}\}_{i=2}^m, u=C[r_1],\iEnv~\tin~r_2)~\text{ if }ar(c)=0 ;~\text{otherwise:}\\
                  & (\tletr~x_1=(c\ \vv{t})^\lsub,\{x_i=x_{i-1}\}_{i=2}^m, 
                     u=C[(\tcase_K\ x_m^\lvis~\tof~ \{ \dots((c\ \vv{z}) \casepf r)\dots \})],\iEnv~\tin~ s)\\
                 & \qquad \rightarrow (\tletr~x_1=(c\ \vv{y}), \{y_i=t_i\}_{i=1}^{ar(c)}, \{x_i=x_{i-1}\}_{i=2}^m, 
                  u=C[\tletr~\{z_i=y_i\}_{i=1}^{ar(c)}~\tin~ r],\iEnv{~\tin~} s)\\
\end{array}$\\
     The variables $y_i$ are fresh ones in (case-in) and (case-e).
\caption{Basic $\LRP$-reduction rules \cite{schmidt-schauss-sabel-WPTE:14} \label{fig:basred}}
\end{minipage}}}
\end{figure*}
 
\section{Polymorphic and Untyped Lazy Lambda Calculi} \label{sec:lrp}
We introduce the polymorphically typed calculus $\LRP$, and the variant $\LRPgc$ with garbage collection,
since numerous complex transformations have their nice space improving property under all circumstances
(in all contexts) only in a typed language. Technically, this shows up in the proofs when we have to argue over all contexts, 
which are strictly less than without types. For example,  case analyses have to inspect less cases, 
in particular for list-processing functions (i.e. a smaller number and simpler forking diagrams). 

 
\subsection{$\LRP$: The Polymorphic Variant}\label{subsec:lrp-def}

Let us recall  the polymorphically typed and extended lazy lambda calculus ($\LRP$) 
\cite{sabel-schmidt-schauss:2015:frank-55,schmidtschauss-sabel-PPDP:2015,schmidt-schauss-sabel-WPTE:14,schmidt-schauss-sabel:frank-56:15}.
We  motivate and 
introduce 
several necessary extensions of $\LRP{}$ which support realistic space analyses.

$\LRP{}$ \cite{schmidt-schauss-sabel-WPTE:14} is $\LR{}$ (an extended call-by-need lambda calculus with \tletrec, 
\eg\  see \cite{schmidt-schauss-schuetz-sabel:08}) extended with types. 
I.e. $\LRP{}$ is an extension of the lambda calculus by polymorphic types, 
recursive \texttt{letrec}-expressions, \texttt{case}-expressions, \texttt{seq}-expressions, data constructors, polymorphic abstractions $\Lambda a.s$ 
to express polymorphic functions and  type applications $(s\ \tau)$ for type instantiations. The syntax of expressions and types of $\LRP{}$ is 
defined in Fig.~\ref{fig:lrp}.

An expression is well-typed if it can be typed using typing rules that are defined in \cite{schmidt-schauss-sabel-WPTE:14}. 
%
$\LRP{}$ is a core language of Haskell and is simplified compared to Haskell, because it does not have type classes and is only polymorphic in the bindings of {\tletr} variables. 
But $\LRP{}$ is sufficiently expressive for polymorphically typed lists and functions working on such data structures. 

From now on we use $\iEnv$ as abbreviation for a multiset of bindings of the form $x = e$, also called  \texttt{letrec}-environment.
We also use  $\{x_{g(i)}=s_{f(i)}\}_{i=j}^m$ for $x_{g(j)}=s_{f(j)},\dots,x_{g(m)}=s_{f(m)}$ and $alts$ for
 \texttt{case}-alternatives. Bindings in \tletr-environments can be commuted. We use $FV(s)$ and $BV(s)$ to denote free and bound variables of an expression 
 $s$, $LV(\iEnv)$ to denote the binding variables of a
  \texttt{letrec}-environment, and  we abbreviate $(c_{K,i}\ s_1\ \dots\ s_{ar(c_{K,i})})$ with $c\ \vv{s}$ and $\lambda x_1.\dots\lambda x_n.s$ with 
  $\lambda x_1,\dots,x_n.s$. The data constructors \texttt{Nil} and \texttt{Cons} are used to represent lists, but we may also use the Haskell-notation \texttt{[]} and (:) instead. 
A {\em context} $C$ is an expression with exactly one (typed) hole $[\cdot_\tau]$ at expression position. 
A {\em surface context}, denoted $\SC$, 
is a context where the hole is not within an abstraction,
and a {top context}, denoted $\T$, is a context where the hole is not in an abstraction nor in a case-alternative.
A {\em reduction context} is a context where reduction may take place, and it is defined using a labeling algorithm that indicates 
the call-by-need reduction positions 
\cite{schmidt-schauss-sabel-WPTE:14}.
Reduction contexts are for example  $[\cdot]$, $([\cdot]~e)$, $(\tcase~[\cdot]~ \ldots)$ and $\tletr~x = [\cdot], y = x,\ldots~\tin~(x~\ttrue)$.
Note that reduction contexts are surface as well as top-contexts.
A {\em value} is an abstraction $\lambda x.s$, 
a polymorphic abstraction $u$ or a constructor application $c\ \vv{s}$. 

\ignore{    weil es gleich nochmal kommt
After the reduction position is determined using the labeling algorithm of \cite{schmidt-schauss-sabel-WPTE:14}, 
a unique reduction rule of Fig.~\ref{fig:basred} is applied at this position
which constitutes a normal-order reduction step. 
\endignore}
We explain the rules in Fig.\ \ref{fig:basred}.  The classical $\beta$-reduction is replaced by the sharing (lbeta). 
(Tbeta) is used for type instantiations concerning polymorphic type bindings. The rules  (cp-in) and (cp-e) copy abstractions which are needed 
when the reduction rules have to reduce an application $(f\ a)$ where $f$ is an abstraction defined in a \texttt{letrec}-environment. 
The rules (llet-in) and (llet-e) are used to merge nested \texttt{letrec}-expressions;
 (lapp), (lcase) and (lseq) move a \texttt{letrec}-expression out of an application, a \texttt{seq}-expression or a \texttt{case}-expression;
  (seq-c), (seq-in) and (seq-e) evaluate \texttt{seq}-expressions, where the first argument has to be a value or a value which is reachable through 
  a \texttt{letrec}-environment.
  (case-c), (case-in) and (case-e) evaluate 
  \texttt{case}-expressions by using \texttt{letrec}-expressions to realize the insertion of the variables for the appropriate
  \texttt{case}-alternative.

The following abbreviations are used:  (cp) is the union of (cp-in) and (cp-e);
 (llet) is the union of (llet-in) and (llet-e);
  (lll) is the union of (lapp), (lcase), (lseq) and (llet);
   (seq) is the union of (seq-c), (seq-in), (seq-e);
  (case) is the union of (case-c), (case-in), (case-e).

\begin{definition}[Normal order reduction] \label{def:redstep}
A {\em normal order reduction step} $s\xrightarrow{\LRP} t$ is performed (uniquely) if the (top-down) labeling algorithm in
 \cite{schmidt-schauss-sabel-WPTE:14} terminates 
 on 
 $s$ inserting the (superscript) labels $\lsub$ (subexpression) and $\lvis$ (visited) 
  and the applicable rule (i.e. matching also the labels) of
 Fig.\ \ref{fig:basred} produces $t$. 
 The reduction sequence $\xrightarrow{\LRP,*}$  is the reflexive, transitive closure, 
  $\xrightarrow{\LRP,+}$ is the  transitive closure of $\xrightarrow{\LRP}$ and 
  $\xrightarrow{\LRP,k}$ denotes $k$ $\xrightarrow{\LRP}$-steps.
\end{definition}

The labeling algorithm proceeds top-down in an expression, marks the demanded subexpressions and finally detects the 
reduction position. It also marks the target position for a copy operation (see (cp-e) as an example), and the indirection chains used in 
(case)- and (seq)-reductions. 
 In Fig.\ \ref{fig:basred}  we omit the  types in all rules with the exception of (Tbeta) for simplicity. 
Note that normal-order reduction is type safe.


\begin{definition} \label{def:whnf}
A weak head normal form (WHNF) in LRP is a value $v$, or an expression $\tletr~\iEnv~\tin~ v$, where $v$ is a value, or an expression 
    $\tletr~x_1=c\ \vv{t},\{x_i=x_{i-1}\}_{i=2}^m,\iEnv~\tin~ x_m$.
 An expression $s$ {\em converges} to an expression $t$ ($s\maycon t$ or $s\maycon$ if we do not need $t$) if $s\xrightarrow{\LRP,*} t$ where $t$ is a WHNF.
  Expression $s$ {\em diverges} ($s\maydiv$) if it does not converge. 
\end{definition}

\ignore{
\begin{remark}
Our definition of WHNF is slightly different from the WHNF definition in \cite{schmidt-schauss-schuetz-sabel:08} and other papers.
Here we view $\tletr~~x_1=\lambda y.s,\{x_i=x_{i-1}\}_{i=2}^m,\iEnv~\tin~ x_m$ as a WHNF, whereas other papers  make one reduction step
to obtain  $\tletr~~x_1=\lambda y.s,\{x_i=x_{i-1}\}_{i=2}^m,\iEnv~\tin~ \lambda y.s$, which is the resulting WHNF in those papers. 

There is no difference in convergence, divergence, contextual preorder, equivalence. Also the reduction length $\rln(.)$ that counts 
the (lbeta), (case), (seq) reductions is unchanged. Only counting the number of (cp)-reductions may be different for the same expression. 
\end{remark}
}

\begin{definition} \label{def:equi}
For $\LRP$-expressions $s,t$ of the same type $\tau$, $ s\leq_c t$ holds iff $\forall C[\cdot_\tau]:C[s]\maycon \Rightarrow C[t]\maycon$,
and $s\sim_c t$ holds iff $s\leq_ct$ and $t\leq_c s$. The relation $\leq_c$ is called {\em contextual preorder} and $\sim_c$ is called {\em contextual equivalence}.
\end{definition}

The following notions of reduction length are used for measuring the time behavior in $\LRP{}$.
\begin{definition} \label{def:rln}
For a closed $\LRP$-expression $s$ with $s\maycon s_0$, let $\rln(s)$ be the sum of all (lbeta)-, (case)- and (seq)-reduction steps in $s\maycon s_0$,
let $\rln_{\LCSC}(s)$ be the  sum of all a-reduction steps in $s\maycon s_0$ with $a \in \LCSC$, where $\LCSC = \{(lbeta), (cp), (case), (seq)\}$, 
and let   $\rlnall(s)$ be the total number of reduction steps, but not (Tbeta), in $s\maycon s_0$. 
\end{definition}

\begin{figure*}[t]
 \fbox{
\begin{minipage}{.98\textwidth}
$\begin{array}{@{}lll@{}}
 \mbox{(gc1)}~ &   \tletrxnk{\{x_i=s_i\}_{i=1}^n, \iEnv }{t} \to \tletrxnk{\iEnv}{t}& \mbox{ if  } \forall i: x_i\not\in\FV(t,\iEnv), 
       n > 0
\\[.2ex]
\mbox{(gc2)}      &  \tletrxnk{x_1 = s_1,\ldots,x_n = s_n}{t} \to t&\mbox{ if for all } i: x_i \not\in \FV(t) 
\end{array}$  
    \caption{Garbage collection transformation rules for $\LRPgc$}\label{fig:extra-transformation-rules-LRPgc}  
 \end{minipage}
}
\end{figure*}
\begin{figure*}[t]
 \fbox{
\begin{minipage}{.98\textwidth}
     $\begin{array}{l@{~}c@{~}l}
  \size(x) &= & 0\\
  \size(s~t) &= & 1+ \size(s) + \size(t)\\
  \size(\lambda x.s) & = & 1+ \size(s)\\
  \size(\tcase~e~\tof~\talt_1 \ldots \talt_n) & = & 1+ \size(e)   ~~+ \sum_{i=1}^n \size(\talt_i)\\
  \size((c~x_1 \ldots x_n)~\casepf~ e) & = & 1 + \size(e) \\
  \size(c~s_1 \ldots  s_n) & = & 1 + \sum \size(s_i)\\
  \size(\tseq~s_1~s_2) & = & 1 + \size(s_1) +  \size(s_2)\\
  \size(\tletr~x_1 = s_1, \ldots,  ~~~x_n = s_n~\tin~s) & = & \size(s)  + \sum \size(s_i)
  \end{array}$
  \caption{Definition of $\size{}$}\label{fig:space-size}
 \end{minipage}
}
\end{figure*}
%

\ignoreForAbstract{
\subsection{The Untyped Calculus $\LR$}

To be self contained, we give the necessary definitions and  connections between $\LRP$ and $\LR$ as these appear in 
  \cite{sabel-schmidt-schauss:2015:frank-55}.
The good news is that if (Tbeta)-reduction steps (that only manipulate types) are ignored, then this constitutes exactly the normal order reduction 
of the untyped expression.

%
 
 \begin{definition}
 The calculus $\LR$  is defined on the set of expressions that is generated by a grammar that is derived from the one in  Fig.\ \ref{fig:lrp} 
 by omitting the types in the expression, but keeping the type constructor $K$ at the $\tcase_K$ constructs.\\
 The type erasure function $\varepsilon: \LRP \to \LR$ 
maps $\LRP$-expressions to
$\LR$-expressions by removing the types, the type information and the $\Lambda$-construct. In particular:
$\varepsilon(s~\tau) = \varepsilon(s)$, $\varepsilon(\Lambda a.s) = \varepsilon(s)$, $\varepsilon(x::\rho) = x$, and  $\varepsilon(c::\rho) = c$. 
We also define the type erasure for reduction sequences.
\end{definition}
 
 Clearly, $\xrightarrow{\LRP}$-reduction steps are mapped by $\varepsilon$ to 
 $\LR$-normal-order reduction steps
 where exactly the $(Tbeta)$-reduction steps are omitted.
The translation $\varepsilon$ is adequate, but not fully abstract: 
 \begin{proposition}
The translation $\varepsilon$ is adequate:\\
$\varepsilon(e_1) \sim_c \varepsilon(e_2) \implies e_1 \sim_{c} e_2$.
  \end{proposition}
  
  It is not fully abstract (i.e. $e_1 \sim_{c} e_2$ does not imply $\varepsilon(e_1) \sim_c \varepsilon(e_2)$);
   an example will be the (caseId) transformation (see Section \ref{sec-caseid}).
  

\begin{definition} Let $s,t$ be two $\LRP$-expressions of the same type $\rho$.
The improvement relation $\preceq$ for $\LRP$ is defined as: 
Let $s \preceq t$  iff $s \sim_c t$ and for all contexts $C[\cdot::\rho]$:
if $C[s], C[t]$  are closed, then
$\rln(C[s]) \leq \rln(C[t])$. 
If  $s \preceq t$ and  $t \preceq s$, we write $s \approx t$.
\end{definition}
 
The following facts are valid and can easily be verified or found in the literature 
   \cite{sabel-schmidt-schauss:2015:frank-55,schmidtschauss-sabel-PPDP-extended:2016,schmidt-schauss-schuetz-sabel:08}:
   
\begin{theorem}\label{thm:length-facts} ~    
\begin{enumerate}
\item For a closed $\LRP$-expression $s$, the equations $\rln(s) = \rln(\varepsilon(s))$ and 
$\rln_{LCSC}(s) = \rln_{LCSC}(\varepsilon(s))$  hold.
\item The reduction rules (Fig.\ \ref{fig:basred}) and extra transformations (Figs. \ref{fig:extra-transformation-rules}, 
\ref{fig:special-sp-impr-transformation-rules}
\ref{fig:special-transformation-rules}) 
  in their typed forms 
      can also be used in $\LRP$. They are correct program transformations and (time-) improvements.
 \item If $s \xrightarrow{a} t$ where $a$ is a reduction rule in any context, then $\rln_{LCSC}(s) \ge \rln_{LCSC}(t)$
 \item If $s \xrightarrow{a} t$ where $a$ is an extra transformation in any context, then $\rln_{LCSC}(s) = \rln_{LCSC}(t)$.
\item Common subexpression elimination applied to well-typed expressions is a (time-) improvement in $\LRP$ 
   (\cite{schmidtschauss-sabel-PPDP:2015}). 
\end{enumerate} 
\end{theorem}

 }

\subsection{LRPgc: LRP with Garbage Collection}\label{sec:calculi-gc}

As extra reduction rule in the normal order reduction we add garbage collection   (gc), 
which is the union of (gc1) and (gc2), but restricted to the top letrec (see Fig.\ \ref{fig:extra-transformation-rules-LRPgc}). 
%

\begin{definition}\label{def:LRP-variant}   
We define  $\LRPgc$, which employs all the rules of $\LRP$  and (gc) (see Fig.\ \ref{fig:extra-transformation-rules-LRPgc}) as follows: 
Let $s$ be an $\LRP$-expression.
A {\em normal-order-gc (LRPgc) reduction step} $s \xrightarrow{\nogc} t$ is defined by two cases:
\begin{enumerate}
 \item\label{LRPgc-1} If a (gc)-transformation is applicable to $s$ in  the empty context, i.e. $s \xrightarrow{gc} t$,
 then $s \xrightarrow{\nogc} t$, where the maximal possible set of bindings in the top letrec-environment of $s$ is removed.
 \item \label{LRPgc-2} If  \eqref{LRPgc-1}  is not applicable and $s \xrightarrow{\LRP} t$, then $s \xrightarrow{\nogc} t$.
\end{enumerate}
 
A sequence of $\nogc$-reduction steps is called a
{\em normal-order-gc reduction  sequence or LRPgc-reduction sequence}.
A WHNF without $\xrightarrow{\nogc,gc}$-reduction possibility is called an {\em LRPgc-WHNF}.
If the LRPgc-reduction sequence of an expression $s$ halts with a LRPgc-WHNF, then we say $s$ {\em converges w.r.t.\  LRPgc},
denoted as $s\downarrow_{\nogc}$, or $s\downarrow$, if the calculus is clear from the context.
\end{definition}

The extension of $\LRP$-normal-order reduction by garbage collection steps does not change the convergence and correctness: 

\begin{theorem}\label{thm:LRP-LRPgc-equivalence}
The calculus $\LRP$   is  convergence-equivalent to $\LRPgc$. I.e. 
for all expressions $s$: $s{\downarrow} \iff s{\downarrow}_{\nogc}$.
Contextual equivalence and preorder are the same for $\LRP$ and $\LRPgc$.
\end{theorem}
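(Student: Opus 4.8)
The plan is to prove Theorem~\ref{thm:LRP-LRPgc-equivalence} by exploiting the fact that garbage collection is already known to be a correct program transformation in $\LRP$, and that the only change in $\LRPgc$ is to interleave (gc)-steps into the normal-order reduction at the top letrec. The key structural observation is that a single $\xrightarrow{\nogc}$-step is either an ordinary $\xrightarrow{\LRP}$-step or a (gc)-step applied in the empty context; hence any $\LRPgc$-reduction sequence is an $\LRP$-reduction sequence with extra (gc)-steps inserted. Since (gc) removes only bindings that are unreachable from the in-part (by the side conditions $x_i\not\in\FV(t,\iEnv)$ in (gc1) and $x_i\not\in\FV(t)$ in (gc2)), these removed bindings can never be demanded by subsequent normal-order reduction, so inserting or deleting them does not affect which rule fires next or whether a WHNF is eventually reached.

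First I would establish the two directions of $s{\downarrow} \iff s{\downarrow}_{\nogc}$ separately. For the direction $s{\downarrow}_{\nogc} \Rightarrow s{\downarrow}$, I would take an $\LRPgc$-reduction sequence ending in an $\LRPgc$-WHNF and erase all the (gc)-steps; the remaining $\LRP$-steps form an $\LRP$-reduction (possibly after trivial bookkeeping to re-insert the collected bindings, which are irrelevant by the garbage condition) ending in a WHNF. For the converse $s{\downarrow} \Rightarrow s{\downarrow}_{\nogc}$, I would take a normal-order $\LRP$-reduction to a WHNF and show that prepending the forced (gc)-steps mandated by Definition~\ref{def:LRP-variant}\eqref{LRPgc-1} before each $\LRP$-step yields a valid $\LRPgc$-reduction to an $\LRPgc$-WHNF. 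The cleanest technical vehicle is a commutation/postponement lemma stating that a (gc)-step at the top letrec commutes with the following normal-order step: if $s_1 \xrightarrow{gc} s_1'$ and $s_1 \xrightarrow{\LRP} s_2$, then $s_1' \xrightarrow{\LRP} s_2'$ with $s_2 \xrightarrow{gc,*} s_2'$ (and labels/reduction positions are preserved). This is proved by a case analysis on the reduction rule applied, using that the garbage bindings are not in the free variables of the active redex and therefore are never the binding chain $\{x_i=x_{i-1}\}$ nor the copied value in (cp), (seq-in), (case-in), etc.

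For the second assertion, that contextual equivalence and preorder coincide for the two calculi, I would argue that $\leq_c$ is defined purely in terms of convergence $\maycon$ quantified over all contexts $C[\cdot_\tau]$ (Definition~\ref{def:equi}). Since $\LRP$ and $\LRPgc$ share the same set of expressions and the same notion of context, and since the convergence-equivalence just proved gives $C[s]{\downarrow} \iff C[s]{\downarrow}_{\nogc}$ for every context $C$, the implication $C[s]\maycon \Rightarrow C[t]\maycon$ holds in $\LRP$ if and only if it holds in $\LRPgc$. Thus $s\leq_c t$ in $\LRP$ iff $s\leq_c t$ in $\LRPgc$, and the same for $\sim_c$.

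The main obstacle I anticipate is the commutation lemma, specifically verifying that forcing the maximal (gc)-step before each normal-order step does not disturb the labeling algorithm that determines the reduction position. One must check that removing unreachable top-level bindings leaves the demanded subexpression, the target of a copy, and the indirection chains used by (seq)- and (case)-reductions all intact; the side conditions of (gc1)/(gc2) guarantee this, but making it rigorous requires relating the garbage bindings to the set of variables the labeling algorithm can reach, and confirming that maximality of the (gc)-step (removing \emph{all} collectable bindings) does not accidentally remove a binding that becomes live only after the step. This last point is handled by observing that liveness is monotone under the garbage condition: a binding unreachable from the in-expression stays unreachable, so no collected binding can ever be revived by a subsequent reduction.
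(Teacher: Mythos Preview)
The paper itself does not include a proof of Theorem~\ref{thm:LRP-LRPgc-equivalence}; the statement is given without argument and, like several other results in the paper, is deferred to the accompanying technical report~\cite{schmidt-schauss-dallmeyer:2017:frank-57}. So there is no in-paper proof to compare against.

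Assessed on its own, your approach is the standard one and is sound. The commutation of a top-level (gc) with an arbitrary normal-order step is exactly the right lemma, and your derivation of the coincidence of $\leq_c$ and $\sim_c$ from convergence-equivalence is immediate from Definition~\ref{def:equi}, since both calculi share the same expressions and contexts. One small simplification is available: for the direction $s{\downarrow}_{\nogc}\Rightarrow s{\downarrow}$ you do not need to ``re-insert'' the collected bindings. Since (gc) is already known to be a correct transformation in $\LRP$ (which you invoke at the outset), each step of an $\LRPgc$-reduction---whether an $\LRP$-step or a (gc)-step---reflects $\LRP$-convergence; starting from the final WHNF and walking back gives $s{\downarrow}$ directly, without any bookkeeping or commutation. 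The commutation lemma is then only needed for the direction $s{\downarrow}\Rightarrow s{\downarrow}_{\nogc}$, where you must show that inserting the forced maximal (gc) before each $\LRP$-step still leads to an $\LRPgc$-WHNF. Your anticipated case analysis on the reduction rules, using that the removed bindings lie outside the set of variables reachable by the labeling algorithm (and hence outside any indirection chain, copy target, or scrutinized binding), is the correct argument; note also that the maximality condition in Definition~\ref{def:LRP-variant}\eqref{LRPgc-1} guarantees that after one (gc)-step no further (gc) is applicable, so the interleaving terminates.
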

 

%

\section{Definitions of Space Improvements}\label{sec:space-improvements} 

From now on we use the calculus $\LRPgc$ as defined in Definition \ref{def:LRP-variant}.  
 We define an adapted (weaker) size measure than the size of the syntax tree, 
 which is useful for measuring the maximal space required to reduce an expression to a WHNF.
 The size-measure omits certain components. This turns into an advantage, since it enables proofs for the  exact behavior 
 w.r.t.\  our space measure for a lot of transformations.

\begin{definition}
The size $\size(s)$ of an expression $s$ is defined in Fig.\ \ref{fig:space-size}.
\end{definition}

%
The $\size$-measure does not count variables, it counts letrec-bindings  only by the size of the bound expressions, 
and it ignores the type expressions and type annotations. A justification for this omission is that this corresponds to the 
size (number of nodes)  of the sharing graph of the whole program. 
A technical justification for defining $\size(x)$ as $0$ is that let-reduction rules do not change the size, and 
that this is compatible with the size in the machine language. For example, the bindings $x = y$ do not contribute to the size.
This is justified, since the abstract machine (\cite{dallmeyer-schmidtschauss:16}) does not create $x=y$ bindings
(not even implicit ones) and 
instead makes an immediate substitution.

\ignore{
The sizes $\size$ and $\synsize$ differ only by a constant factor:

 \begin{proposition}\label{prop:spmax-constant-deviation}
 Let $s$ be an $\LRP$-expression. If $s$ does not permit a garbage collection of any binding, and there are no $x=y$-bindings, then  
 $\synsize(s) \leq (\mathit{maxarity}+1)*\size(s)$ and $\size(s) \leq \synsize(s)$, where $\mathit{maxarity}$ is the maximum of $2$ and 
 the maximal arity of constructor symbols in the language.
 \end{proposition}
 \begin{proof}
 It is sufficient to check every subexpression using an inductive argument. 
 \end{proof}
 }

\begin{definition}\label{def:of-spmax}
The space  measure $\spmax(s)$  of the reduction of a closed expression $s$ is 
the maximum of those $\size(s_i)$, where  $s_i   \xrightarrow{\nogc} s_{i+1}$ is not a (gc), and where the reduction sequence is
$s = s_0 \xrightarrow{\nogc} s_1   \xrightarrow{\nogc} \cdots  \xrightarrow{\nogc} s_n$, and $s_n$ is a WHNF.
If $s \mustdiv$, then $\spmax(s)$ is defined as $\infty$.

 For a (partial) reduction sequence $\Red = s_1 \to  \cdots \to s_n$, we define $\spmax(\Red)$ $=$ $\max_i\{size(s_i) \mid$  $s_i \to s_{i+1}$ 
  is not a (gc), and also  $s_n$ is not $\LRPgc$-reduccible with a (gc)-reduction$\}$  
\end{definition}
Counting space only if there is no (LRPgc,gc) possible is consistent with the definition in  \cite{gustavsson-sands:01}.
It also has the effect of avoiding certain small and short peaks in the space usage. The advantage is a better correspondence with the abstract machine
and it leads to comprehensive results. 

\begin{definition}\label{def:space-improvement}
Let $s,t$ be two expressions with $s \sim_c t$ and $s{\maycon}$. Then $s$ is a {\em space-improvement} of $t$, 
$s \leq_{\spmax}~t$, iff
for all contexts $C$ such that $C[s]$, $C[t]$ are closed, $\spmax(C[s]) \leq  \spmax(C[t])$ holds. 
The expression $s$ is  {\em space-equivalent} to $t$, $s \sim_{\spmax}~t$, iff for all contexts $C$ such that $C[s]$, $C[t]$ are closed, 
  $\spmax(C[s]) = \spmax(C[t])$ holds. 
A transformation $\xrightarrow{\mathit{Trans}}$ is called a space-improvement (space-equivalence) if
  $s \xrightarrow{\mathit{Trans}} t$ implies that $t$ is a space-improvement of (space-equivalent to, respectively)\ $s$. 
\end{definition}

Note that $\leq_{\spmax}$ is a precongruence, i.e. it is transitive and  $s \leq_{\spmax}~t$ implies $C[s] \leq_{\spmax}~C[t]$, and that
$\sim_{\spmax}$ is a congruence. Note also that  $s \leq_{\spmax}~t$ implies  $\size(s) \leq \size(t)$, using $C = \lambda x.[.]$.

Let $s,t$ be two expressions with $s \sim_c t$ and $s{\maycon}$.  
The relation  $s \leq_{R,\spmax} t$ holds, provided the following holds. 
For all reduction contexts $R$ such that $R[s]$, $R[t]$ are closed, we have $\spmax(R[s]) \leq  \spmax(R[t])$. 
The relation  $s \sim_{R,\spmax} t$ holds iff  $s \leq_{R,\spmax} t$ and  $t \leq_{R,\spmax} s$.

\begin{lemma}[Context Lemma for Maximal Space Improvement]\label{lemma:context-sp-impr}  
In $\LRPgc$ the following holds:  If $\size(s) \leq \size(t)$, $\FV(s) \subseteq \FV(t)$, and  
$s \leq_{R,\spmax} t$,  then $s \leq_{spmax} t$.
\end{lemma}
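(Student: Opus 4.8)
The plan is to lift the property from reduction contexts to arbitrary contexts by a multicontext induction, exploiting that normal-order reduction is deterministic, so that in each configuration there is a unique principal reduction position. First I would dispose of the diverging case: the relation $\leq_{R,\spmax}$ presupposes $s \sim_c t$, and since $\sim_c$ is a congruence we get $C[s]\maycon \iff C[t]\maycon$ for every closing context $C$; when both diverge, $\spmax(C[s]) = \infty = \spmax(C[t])$ and the inequality is trivial, so I may assume $C[t]\maycon$ (hence $C[s]\maycon$). Then I would generalize the claim to multicontexts $M$ with $n \geq 0$ holes: if $M[\vec s]$ and $M[\vec t]$ are closed (every hole filled by $s$, resp.\ by $t$), then $\spmax(M[\vec s]) \leq \spmax(M[\vec t])$; the lemma is the case $n=1$. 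Two preliminary observations drive the argument: (a)~since $\size$ is additive over the holes and $\size(s)\le\size(t)$, every configuration satisfies $\size(M'[\vec s]) \le \size(M'[\vec t])$ — this is where $\size(s)\le\size(t)$ is used; and (b)~since $\FV(s)\subseteq\FV(t)$ and $M[\vec t]$ is closed, every \emph{mixed} filling of $M$ (some holes $s$, some $t$) is again closed — this is where the free-variable hypothesis is used.

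The induction is on the lexicographic pair $(\ell,n)$, where $\ell$ is the length of the normal-order reduction of $M[\vec t]$ to WHNF and $n$ is the number of holes, with the case distinction on whether a hole sits at the principal reduction position of $M[\vec t]$. If none does (the internal case), then either $M[\vec t]$ is already a WHNF, in which case $M[\vec s]$ is a WHNF too because WHNF-ness is then determined by the structure of $M$ alone, and the result follows from (a); or the normal-order redex lies entirely within $M$, so the identical step applies to both fillings, giving $M \to M'$ (possibly duplicating or erasing holes, as under (cp)). Here $\ell$ strictly decreases, so the induction hypothesis applies to $M'$, and combining with (a) to bound the current configuration yields the goal.

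If a hole, say hole $d$, is principal, I let $A$ be the single-hole context obtained from $M$ by filling all other holes with $s$; it is a genuine reduction context, since the spine to the principal position does not pass through the off-spine non-principal holes, and filling them with closed terms preserves this. By (b) the terms $A[s]=M[\vec s]$ and $A[t]$ are closed, so the hypothesis $s \leq_{R,\spmax} t$ applied to $A$ gives $\spmax(M[\vec s]) = \spmax(A[s]) \le \spmax(A[t])$. Now $A[t]$ equals $M'[\vec s]$ for the $(n{-}1)$-hole multicontext $M'$ obtained from $M$ by freezing hole $d$ to $t$, while $M'[\vec t]=M[\vec t]$; the induction hypothesis (fewer holes, unchanged $\ell$) yields $\spmax(A[t]) = \spmax(M'[\vec s]) \le \spmax(M'[\vec t]) = \spmax(M[\vec t])$, and chaining the two inequalities closes the case. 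Well-foundedness holds because internal steps strictly decrease $\ell$ and principal-hole steps leave $\ell$ fixed while strictly decreasing $n$.

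The main obstacle is the bookkeeping of garbage collection in $\spmax$: sizes are counted only at configurations from which no (gc) step is possible, so I must ensure that each configuration counted on the $s$-side is dominated by a counted configuration on the $t$-side. The delicate point is that, because $\FV(s)\subseteq\FV(t)$, any binding that is garbage on the $t$-side is also garbage on the $s$-side, but the $s$-side may admit \emph{strictly more} collection; I would argue that such additional collection only lowers $\size$ and hence cannot raise $\spmax(M[\vec s])$, so the correspondence of non-(gc) configurations needed to compare the two maxima can still be maintained. The remaining checks — that the reduction-position classification and the reduction-context property of $A$ are stable under filling the non-principal holes, and that (cp)-duplication of holes is compatible with the additive size accounting of (a) — are routine but technical.
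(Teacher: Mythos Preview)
Your proposal is correct and follows essentially the same approach as the paper: a multicontext generalization with lexicographic induction on the length of the normal-order reduction of the $t$-instance and the number of holes, splitting on whether a hole sits in reduction position. The paper's generalization is marginally broader---it allows distinct pairs $(s_i,t_i)$ in the different holes (each satisfying the hypotheses), which absorbs the $\alpha$-renaming that (cp) introduces when duplicating holes---whereas you keep a single pair $(s,t)$ throughout; this is harmless here but worth being explicit about. Your treatment of the (gc) interaction is the one place that needs firming up: the paper handles it as a separate case and appeals to an auxiliary lemma (that a top-level (gc) step, even a non-maximal one, leaves $\spmax$ unchanged) rather than the informal ``extra collection can only help'' argument you sketch.
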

\begin{proof}(Sketch \cite{schmidt-schauss-dallmeyer:2017:frank-57})~
 The proof is by generalizing the claim to multiple pairs $(s_i,t_i)$ of expressions in multicontexts $M$, i.e. by comparing
$M[s_1,\ldots,s_n]$  and $M[t_1,\ldots,t_n]$, where the assumptions must hold for all pairs $s_i,t_i$. 
The induction proof is (i) on the number of $\nogc$-reduction steps of $M[t_1,\ldots,t_n]$, 
and (ii)  on the number of holes of $M$. 
The various cases of reductions of $M[t_1,\ldots,t_n]$ are analyzed, and in all cases the claim can be  shown  using the induction hypothesis.

Note that the proof technique would not work for call-by-name variants of the calculus. The reason is that substitution is incompatible with the proof technique.
\end{proof}

\ignore{
\ignoreForAbstract{
\begin{proof}
Let $M$ be a multi-context. We prove the more general claim 
that if for all $i$: $\size(s_i) \leq \size(t_i)$,  $\FV(s_i) \subseteq \FV(t_i)$,   
  $s_i \leq_{R,\spmax} t_i$, and $M[s_1,\ldots,s_n]$ and $M[t_1,\ldots,t_n]$ 
are closed and $M[s_1,\ldots,s_n]{\maycon}$, then  
$\spmax(M[s_1,\ldots,s_n]) \leq  \spmax(M[t_1,\ldots,t_n])$. \\[1mm]
By the assumption that $s_i \sim_c t_i$, we have $M[s_1,\ldots,s_n] \sim_c M[t_1,\ldots,t_n]$ and thus $M[s_1,\ldots,s_n]{\maycon} \iff M[t_1,\ldots,t_n]{\maycon}$. 
The induction proof is (i) on the number of $\nogc$-reduction steps of $M[t_1,\ldots,t_n]$, 
and as a second parameter on the number of holes of $M$. 
%
%
We distinguish the following cases:

(I)~ The \nogc-reduction step of $M[t_1,\ldots,t_n]$ is a (gc). 
   If $M$ is the empty context, then we can apply the assumption $s_1 \leq_{R,\spmax} t_1$, which shows $\spmax(s_1) \leq \spmax(t_1)$.
   Now we can assume that $M$ is not empty, hence it is a context starting with a \tletr, and in $M[t_1,\ldots,t_n]$ the reduction (gc) removes a subset of the bindings in the top letrec, 
    resulting in $M'[t_1',\ldots,t_k']$.
   Since  $\FV(s_i) \subseteq \FV(t_i)$, the same set of bindings in the top \tletr\ can be  removed in  $M[s_1,\ldots,s_n]$ by (gc) resulting in
    $M'[s_1',\ldots,s_k']$, where the pairs $(s_i', t_i')$
        are renamed versions of pairs $(s_j, t_j)$. 
   If the reduction step is a (gc2), or if it is a (gc1) with $M[s_1,\ldots,s_n] \xrightarrow{\nogc,gc1}
     M'[s_1',\ldots,s_k']$,  then by induction we obtain $\spmax(M'[s_1',\ldots,s_k'])$  $\leq$  $\spmax(M'[t_1',\ldots,t'_k])$.
    Since $\spmax$ is not changed by (gc)-reduction, this shows the claim. 
   However, in case the  (gc1) step that is not a \nogc-reduction step, it does not remove the maximal set of removable bindings in $M[s_1,\ldots,s_n]$.
   By induction we obtain $\spmax(M'[s_1',\ldots,s_k']) \leq \spmax(M'[t_1',\ldots,t'_k])$. 
   We use Lemma \ref{lemma:gc-spmax-eq}, which shows  $\spmax(M'[s_1',\ldots,s_k'])$  {$=$} $\spmax(M[s_1,\ldots,s_n])$, 
   and $\spmax(M'[t_1',\ldots,t_k'])$ $=$ $\spmax(M[t_1,\ldots,t_n])$, and thus the claim.   
   
(II)~  If no hole of $M$ is in a reduction  context and the reduction step is not a (gc), then there are two cases:
    (i) $M[t_1,\ldots,t_n]$ is a WHNF. Then also $M[s_1,\ldots,s_n]$  is a WHNF, and by the assumption, 
      we have $\size(M[s_1,\ldots,s_n]) \le \size(M[t_1,\ldots,t_n])$.
    (ii) The reduction step is 
    $M[t_1,\ldots,t_n] \xrightarrow{\nogc,a} M'[t_1',\ldots,t'_{n'}]$, 
    and  $M[s_1,\ldots,s_n] \xrightarrow{\nogc,a} M'[s_1',\ldots,s'_{n'}]$  with $a \not= gc$, and the pairs $(s_i', t_i')$
   are renamed versions of pairs $(s_j, t_j)$. 
   This shows $\spmax(M'[s_1',\ldots,s'_{n'}])$ $\leq$ $\spmax(M'[t_1',\ldots,t'_{n'}])$ by induction.
   By assumption, the inequation  $\size(M[s_1,\ldots,s_n]) \leq  \size(M[t_1,\ldots,t_n])$, 
   holds, hence by  computing the maximum, we obtain  $\spmax(M[s_1,\ldots,s_n])$ $\leq$ $\spmax(M[t_1,\ldots,t_n])$.

(III)~ Some $t_j$ in $M[t_1,\ldots,t_n]$ is in a reduction position, and there is no \nogc-gc-reduction of $M[t_1,\ldots,t_n]$.
 Then there is one hole, say $i$, of $M$ that is in a reduction position.
With $M' = M[\cdot, \ldots,\cdot,t_i,\cdot,\ldots,\cdot]$, we can apply the induction hypothesis, since the number of holes
of $M'$ is strictly smaller than the number of holes of $M$, and 
 the  number of normal-order-gc reduction steps of $M[t_1,\ldots,t_n]$ is the same as of $M'[t_1,\ldots,t_{i-1},t_{i+1},\ldots,t_n]$, and obtain:\ \ 
$\spmax(M[s_1, \ldots,s_{i-1},t_i,s_{i+1}, \ldots, s_n])$   ${\leq}$\ $\spmax(M[t_1, \ldots,t_{i-1},t_i,t_{i+1},\ldots,t_n])$.
 Also by the assumption:  
$\spmax(M[s_1, \ldots,s_{i-1},s_i,s_{i+1},\ldots,s_n])$  ${\leq}$\ $\spmax(M[s_1, \ldots,s_{i-1},t_i, s_{i+1},\ldots,s_n])$,  
since $M[s_1, \ldots,s_{i-1},\cdot,s_{i+1},\ldots,s_n])$ is a reduction context. Hence   
$\spmax(M[s_1, \ldots,  s_n])$  $\leq \spmax(M[t_1, \ldots, t_n])$.  \qedhere
%
\end{proof}

\begin{example} The conditions $\FV(s) \subseteq \FV(t)$ and $\size(s) \leq \size(t)$ are necessary in the context lemma:
   $\FV(s) \subseteq \FV(t)$ is necessary:
   Let $s =  \tletr ~y = x~ \tin~ \ttrue$, and let $t =  \tletr ~y = \ttrue ~\tin~ \ttrue$. 
   Then $s \sim_c t$, since $s$ and $t$ are both contextually equivalent to $\ttrue$, using garbage collection.
   Also $\size(s) \leq \size(t)$. But $s$ is not a max-space improvement of $t$: Let $C$ be the context $\tletr~x = s_1, z = s_2~\tin~\tseq~z~(\tseq~(c [\cdot])~z)$,
    where $s_1,s_2$ are closed expressions such that $\size(s_1) \ge 2$ and the evaluation of $s_2$ produces a WHNF $s_{2,\WHNF}$ of size at least  $1 + \size(s_1)+\size(s_2)$.
     This is easy to construct using recursive list functions.
    Then the reduction sequence of $C[s]$ reaches the size maximum after $s_2$ is reduced to WHNF due to the first $\tseq$, which is $1+\size(s_1)+\size(s_{2,\WHNF}) +3 +\size(s)$.
    The reduction sequence of $C[t]$ first removes $s_1$, and then reaches the same maximum as $s$, which is  $1+ \size(s_{2,\WHNF}) +3 +\size(t)$.
     Thus $\spmax(C[s]) -\spmax([t]) = \size(s_1) +\size(s)-\size(t) = \size(s_1) -1 > 0$. 
     We have to show that for all reduction contexts $R$, $\spmax(R[s]) \leq \spmax(R[t])$: Reducing $R[s]$ will first shift (perhaps in several steps)
      the binding $y = x$ to the top letrec and then remove it (together with perhaps other bindings) with gc. 
      The same for $R[t]$. After this removal, the expressions are the same.
      Hence   $\spmax(R[s]) \leq \spmax(R[t])$. 
    This shows that if   $\FV(s) \subseteq \FV(t)$ is violated, then the context lemma does not hold in general.
    Note that this example also shows that for arbitrary expressions $s,t$  with $s \sim_c t$ and $s{\maycon}$,  the relation $s \leq_{R,\spmax} t$ 
     does not  imply $\FV(s) \subseteq \FV(t)$.

    $\size(s) \leq \size(t)$ is necessary in the context lemma:
      Let $t$ be a small expression that generates a large WHNF,  and let $s$ be $\tseq~\ttrue~t$. Then $\size(s) > \size(t)$.
      Lemma \ref{lem:improvment-implies-leq} shows (by contradiction)  that $s$ cannot be a space improvement of $t$.
      For all reduction contexts $R$, the first non-gc reduction will join the reduction sequences of $R[s]$ and $R[t]$.
     Since the WHNF of $s$ is large, we obtain $\spmax(R[s]) = \spmax(R[t])$, since the  size difference of $s,t$ which is $1$, is too small compared with the size of the WHNF.
      This implies  $s \leq_{R,\spmax} t$, but $s$ is not a max-space improvement of $t$. Thus the condition $\size(s) \leq \size(t)$ is necessary in the max-space-context-lemma.
\end{example}
}
}

%
%
%

%
 
\begin{corollary}[Context Lemma for Maximal Space Equivalence]\label{lemma:context-sp-equiv} 
If 
$\size(s) = \size(t)$, $\FV(s) = \FV(t)$, and  
$s \sim_{R,\spmax} t$,  then $s \sim_{spmax} t$.
\end{corollary}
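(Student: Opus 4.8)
The plan is to derive the equivalence directly from the Context Lemma for Maximal Space Improvement (Lemma~\ref{lemma:context-sp-impr}) by applying it twice, once in each orientation. First I would unfold the symmetric hypotheses into their two halves: the equality $\size(s) = \size(t)$ yields both $\size(s) \leq \size(t)$ and $\size(t) \leq \size(s)$; the set equality $\FV(s) = \FV(t)$ yields both $\FV(s) \subseteq \FV(t)$ and $\FV(t) \subseteq \FV(s)$; and, by the definition of $\sim_{R,\spmax}$, the hypothesis $s \sim_{R,\spmax} t$ unfolds into $s \leq_{R,\spmax} t$ together with $t \leq_{R,\spmax} s$.

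Next I would apply Lemma~\ref{lemma:context-sp-impr} to the ordered pair $(s,t)$: the three ingredients $\size(s) \leq \size(t)$, $\FV(s) \subseteq \FV(t)$, and $s \leq_{R,\spmax} t$ are exactly its premises, so it yields $s \leq_{\spmax} t$, i.e. $\spmax(C[s]) \leq \spmax(C[t])$ for every context $C$ closing both expressions. Then I would invoke the same lemma on the swapped pair $(t,s)$, using the other halves $\size(t) \leq \size(s)$, $\FV(t) \subseteq \FV(s)$, and $t \leq_{R,\spmax} s$, to obtain $t \leq_{\spmax} s$, i.e. $\spmax(C[t]) \leq \spmax(C[s])$ for every such $C$. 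Combining the two inequalities gives $\spmax(C[s]) = \spmax(C[t])$ for all closing contexts $C$, which is precisely the assertion $s \sim_{\spmax} t$.

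The only point needing care is the implicit side conditions carried by the improvement relations: both $s \leq_{\spmax} t$ and $s \leq_{R,\spmax} t$ presuppose $s \sim_c t$ and $s{\maycon}$. Since $s \sim_{R,\spmax} t$ already bundles $s \leq_{R,\spmax} t$ and $t \leq_{R,\spmax} s$, it supplies $s \sim_c t$ (which is symmetric and hence usable in either orientation) as well as both $s{\maycon}$ and $t{\maycon}$, so both applications of the context lemma are legitimate and no additional convergence argument is required. Beyond this bookkeeping I expect no real obstacle: the statement is a plain symmetrization of the improvement context lemma, so all the genuine reduction analysis has already been done in the proof of Lemma~\ref{lemma:context-sp-impr}.
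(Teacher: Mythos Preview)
Your proposal is correct and matches the paper's intent: the corollary is stated without proof precisely because it follows immediately from Lemma~\ref{lemma:context-sp-impr} by applying it once in each direction, exactly as you describe. Your attention to the implicit side conditions ($s \sim_c t$ and convergence) is appropriate bookkeeping and does not deviate from the paper's approach.
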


%
%
 The context lemmas    
 also hold  if the (stronger) condition $s \leq_{X,\spmax} t$, or $s \sim_{X,\spmax} t$, respectively,
  holds where $X$ means surface- or top-contexts. 
 
 
 We also consider useful program-transformations that are runtime optimizations, but may increase the space usage during runtime, 
and distinguish acceptable and bad behavior w.r.t.\  space usage. 
Transformations that applied in reduction contexts lead to a space increase of at most a fixed (additive) constant 
 are considered as controllable and safe, whereas the case that after the transformation the space increase may exceed 
any constant (depending on the usage of the expressions), is considered uncontrollable, and we say it is a space leak.

\begin{definition}\label{def:space-leak}
Let $\T$ be a transformation and let $s \xrightarrow{T}t$ be an instance with expressions $s,t$. 
\begin{enumerate}
\item\label{def:space-leak-1}  
We say that the $s \xrightarrow{T}t$ is {\em space-safe up to  the constant $c$},  
if for all reduction contexts $R$:
$\spmax(R[t]) \leq  c+\spmax(R[s])$. 
\item  If for some $c$, \eqref{def:space-leak-1} holds for all instances  $s \xrightarrow{T}t$, then we say $T$ is {\em space-safe up to  the constant $c$}.  
\item The transformation $s \xrightarrow{T}t$ is a {\em space leak}, iff for every $b \in \bbbR$, there is a reduction context $R$, such that
$\spmax(R[t]) \geq   b+\spmax(R[s])$.   
\item If there is one instance $s \xrightarrow{T}t$ that is a space leak, then we also say $T$ is a {\em space leak}. 
\end{enumerate} 
\end{definition}
\noindent This (simplistic) definition is a first criterion for a classification of transformations.   
 Definition  \ref{def:space-leak} for a classification of 
 transformations makes sense insofar as space-improvements are not space leaks and 
 space leaks cannot be space improvements.

 We will see below that there are examples of transformations that are not space-improvements but are space-safe up to an additive  constant, 
 and there are also transformations
 that are improvements w.r.t.\  runtime, but space leaks, like (cp),  (cse), and (soec). 

\begin{figure*}[t]
 \fbox{
\begin{minipage}{.98\textwidth}  
$\begin{array}{@{}ll@{}}
 \mbox{(cpx-in)}~ &   \tletrx{x = y,   \iEnv }{C[x]} 
       \to \tletrx{x = y,  \iEnv }{C[y]} \quad \mbox{ where $y$ is a variable and } x \not= y\\
 \mbox{(cpx-e)}~ &   \tletrx{x = y, z = C[x], \iEnv }{t}  
       \to \tletrx{x = y,  z = C[y], \iEnv}{t}   \hspace*{1cm} \mbox{(same as above)}\\ 

 \mbox{(cpcx-in)} &   \tletrx{x = c~\vv{t}, \iEnv }{C[x]}  
               \to  \tletrx{x = c~\vv{y}, \bchainN{y}{t}{\ari(c)}, \iEnv}{C[c~ \vv{y}]} \\  
 \mbox{(cpcx-e)} &   \tletrx{x = c ~\vv{t}, z = C[x], \iEnv }{t} 
        \to \tletrx{x = c ~\vv{y}, \bchainN{y}{t}{\ari(c)},  z = C[c~ \vv{y}],\iEnv}{t} \\

 \mbox{(abs)} & \tletrx{x = c~\vv{t},  \iEnv~ }{s}  
   \to \tletrx{x = c~\vv{x}, \bchainN{x}{t}{\ari(c)},  \iEnv~}{s} 
    \hspace*{5mm}  \mbox{where $\ari(c) \ge 1$} \\

  \mbox{(abse)} &  (c~\vv{t}) 
    \to \tletrx{\bchainN{x}{t}{\ari(c)}}{c~\vv{x}} 
       \hspace*{1cm} \mbox{where $\ari(c) \ge 1$} \\
 
\mbox{(xch)} & \tletrx{x = t, y = x, \iEnv}{r}~\to~ \tletrx{y = t, x = y, \iEnv}{r}  \\

\mbox{(ucp1)}& \tletrx{\iEnv, x = t}{S[x]} \to  \tletrx{\iEnv}{S[t]} \\
\mbox{(ucp2)}  &\tletrx{\iEnv, x = t, y = S[x]}{r}  \to    \tletrx{\iEnv, y = S[t]}{r} \\
\mbox{(ucp3)}      & \tletrx{x = t}{S[x]}  \to   S[t] \hspace*{1cm}  \mbox{where in the three (ucp)-rules, $x$ has at most} \\
    &  \qquad \mbox{ one occurrence in $S[x]$, no occurrence in $\iEnv,t,r$; 
      and $S$ is a surface context.} 
 \end{array}$ 
 \caption{Extra transformation rules}\label{fig:extra-transformation-rules}  
  \end{minipage}
}
\end{figure*}
%
 \begin{figure*}[t]
 \fbox{
\begin{minipage}{.98\textwidth}  
 $\begin{array}{@{}ll@{}}
  \mbox{(case-cx)}  & \tletrx{x = (c_{T,j}~x_1 \ldots x_n),\iEnv} {C[\tcase_T~x~((c_{T,j}~y_1 \ldots y_n) \casepf s)~alts]} \\
      &  \qquad   \to ~\tletrec ~x = (c_{T,j}~x_1 \ldots x_n),\iEnv  
           \quad \tin~C[\tletrx{y_1 = x_1,\ldots, y_n = x_n}{s}] \\
\mbox{(case-cx)}  & \tletrec ~ x = (c_{T,j}~x_1 \ldots x_n),\iEnv,  
     \quad  y =C[\tcase_T~x~((c_{T,j}~y_1 \ldots y_n) \casepf s)~alts]~\tin~r  \\
      &   \qquad    \to ~\tletrec~x = (c~x_1 \ldots x_n),\iEnv,   
         \quad  y = C[\tletrx{y_1 = x_1,\ldots, y_n = x_n}{s}]~\tin~r\\
 \mbox{(case-cx)}  & \mbox{in all other cases:  like (case)} \\
\mbox{(case*)}  & \mbox{is defined as (case) if the scrutinized data expression is of the form  $(c~s_1 \ldots s_n)$, } \\
      & \qquad  \mbox{where $(s_1,\ldots,s_n)$ is not a tuple of different variables, and otherwise it is \mbox{(case-cx)}  } \\
        \mbox{(\gceq)} & \tletrec~x = y, y = s,\iEnv~\tin~r~ \to~\tletrec~y = s,\iEnv~\tin~r \qquad \mbox{ where $x \not\in \FV(s,\iEnv,r)$,}\\
       & \hspace*{1cm} \mbox{ and $y = s$ cannot be garbage collected }\\
  \mbox{(caseId)}& (\tcase_K~s~(pat_1 \casepf pat_1) \ldots (pat_{|D_K|} \casepf pat_{|D_K|})) \to s  \quad \\
  \end{array}$
 \caption{Variations of transformation rules (space improvements)}\label{fig:special-space-improving-transformation-rules}
  \end{minipage}
}
\end{figure*}
\begin{figure*}[t]
 \fbox{
\begin{minipage}{.98\textwidth}  
 $\begin{array}{@{}ll@{}}
  \mbox{(cpS)}~ &   \mbox{is (cp) restricted such that only surface contexts $S$ for the target context $C$ are permitted} \\
     \mbox{(cpcxT)}~ &   \mbox{is (cpcx) restricted such that only top contexts $T$ for the target context $C$ are permitted} \\
   \mbox{(cse)} & \tletrec~x = s, y = s,\iEnv ~\tin~r   ~ \to~ \tletrec~x = s, \iEnv[x/y] ~\tin~r[x/y] \mbox{ where $x \not\in \FV(s)$}  \\ 
   \mbox{(soec)} & \mbox{changing the sequence of evaluation due to strictness knowledge by inserting \tseq.}\\ 
\end{array}$
 \caption{Some special transformation rules (space-worsening)}\label{fig:special-space-worsening-transformation-rules} 

  \end{minipage}
}
\end{figure*}
 
 \section{Space-Safe and Unsafe Transformations}\label{sec:space-safe-transformations}   

%

 More transformations are defined in Fig.\ \ref{fig:extra-transformation-rules}: (cpx) is the union of (cpx-in) and (cpx-e) and copies variables, 
 (cpcx) is the union of (cpcx-in) and (cpcx-e) 
 and copies constructor applications with variable-only-arguments,
 (abs), (abse) abstracts subexpressions by putting them in a binding environment, and (ucp) is the union of (ucp1), (ucp2), and (ucp3) and 
 is a (cp) into a unique occurrence of $x$, followed by a garbage collection. 
 Further transformations are defined and mentioned in Fig.\ \ref{fig:special-space-improving-transformation-rules}  and
\ref{fig:special-space-worsening-transformation-rules}: 
(case-cx) and (case*) are  variants of (case) which behave different if the tested expressions is of the form $(c~x_1 \ldots x_n)$ 
by optimizing the heap-bindings; (cpS) is (cp) where the target for copying is an $S$-context\footnote{$S$,$T$-contexts are defined in Section \ref{subsec:lrp-def}};  (cpcxT) is a variant of (cpcx), 
 where the target context is a $T$-context; (caseId) is a typed transformation that detects case-expressions that are trivial;
(cse) means common subexpression elimination;   (\gceq) is a  specialization of (gc) where a single binding $x=y$ in $s$ is removed,  
  where  $y$ is not free, and there is a binding for $y$ that cannot be garbage collected after the removal of $x = y$; and 
the transformation (soec) means a correct change only of the evaluation order by inserting \tseq-expressions, due to 
  strictness knowledge.
The notation like  $\xrightarrow{(T,(cpcxT))}$ means (cpcxT) applied in a $T$-context, and similar for others.
  


\subsection{On the Space-Safety of Transformations}
An overview of the results for max-space-improvements, -equivalences and space-worsening transformations are in the following theorem
where  further transformations are in Figs.\ \ref{fig:extra-transformation-rules-LRPgc}, \ref{fig:extra-transformation-rules}, 
\ref{fig:special-space-improving-transformation-rules} and \ref{fig:special-space-worsening-transformation-rules}.  
The proof technique for most of the proofs consists of computing complete sets of forking diagrams between transformation steps and the normal-order reduction 
steps and an appropriate induction proof on the length of reduction sequences  
(see \cite{schmidt-schauss-schuetz-sabel:08}  for more explanations),  where  computation of diagrams is simplified thanks to the context lemma
 (see \cite{schmidt-schauss-dallmeyer:2017:frank-57} for details). 

\quad 
 \begin{theorem}\label{main-theorem}  The following table shows the space-improvement and -safety properties of the mentioned transformations.\\
\begin{tabular}{|l|l|@{}}\hline
Improvement  & rules    \\  \hline
 $\succeq_{spmax}$  & 
   (lbeta), 
 (case),
 (seq),
 (lll),
 (gc),
 (case*), 
 (caseId) \\  
 $\sim_{spmax}$ &  
  (cpx),
 (abs), (abse),
 (xch),
 (ucp),
 (case-cx),
 (cpxT),
 (\gceq)\\ 
 $\not\succeq_{spmax}$   &  (cpcx), (cpS)\\ 
 space-safe up to 1  &  (T,(cpcxT))\\
 space-safe up to  $\size(v)$ &  (S,(cpS))  \\
  & \hspace{3mm} {\small where $v$ is the copied abstraction}\\
 space-leak &  (cp),  (cse), (soec)\\ 
 \hline
\end{tabular}
\end{theorem}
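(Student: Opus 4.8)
The plan is to treat the table rule by rule, but with a uniform method driven by the two context lemmas (Lemma \ref{lemma:context-sp-impr} and Corollary \ref{lemma:context-sp-equiv}). For each rule $s \xrightarrow{T} t$ claimed to be a space improvement, I would first discharge the two side conditions of the context lemma: $\size(t) \leq \size(s)$ and $\FV(t) \subseteq \FV(s)$. Both are read off directly from the defining figures — e.g.\ (lll), (cpx), (xch) rearrange bindings without creating syntax-tree nodes that $\size$ counts, so they even preserve $\size$; (gc), (ucp), (caseId) strictly drop material; (case)/(seq) consume a scrutinee or redex. This localizes the real work to the third hypothesis, namely comparing $\spmax(R[t])$ with $\spmax(R[s])$ for reduction contexts $R$ only.

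For that comparison I would follow the forking-diagram method referenced after the theorem: compute a complete set of forking and commuting diagrams describing how a single $T$-step interacts with a normal-order $\xrightarrow{\nogc}$-step, and then induct on the length of the normal-order-gc reduction of $R[s]$ (equivalently $R[t]$, which converges iff $R[s]$ does since $s \sim_c t$). The diagrams let me transport a normal-order step on one side to a matching, possibly empty or multi-step, normal-order segment on the other, while the still-pending $T$-steps reappear on the reduct. Because each diagram is verified to be size-nonincreasing on the transformed side, the induction propagates the pointwise size inequality to the maximum over the whole sequence, which is exactly $\spmax$. The gc-bookkeeping is handled by the convention in Definition \ref{def:of-spmax} (sizes are read only at non-gc states, with the maximal gc applied first); here I would use that $\FV(t)\subseteq\FV(s)$ guarantees every binding collectable on the $s$-side is also collectable on the $t$-side, so the ``maximal gc'' states line up. For the $\sim_{\spmax}$ rules one runs the argument in both directions, using $\size(s)=\size(t)$, $\FV(s)=\FV(t)$ and Corollary \ref{lemma:context-sp-equiv}; here the diagrams must be size-preserving, not merely nonincreasing.

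The negative and quantitative entries are proved by explicit context families rather than by diagrams. For $\not\succeq_{\spmax}$ I would exhibit one context in which the transformed expression is strictly larger along the peak: (cpcx) and (cpS) copy a constructor application or abstraction into a position that is later forced, so $\size$ at the peak grows by the copied material, and the remark $s\leq_{\spmax}t \Rightarrow \size(s)\leq\size(t)$ (via $C=\lambda x.[\cdot]$) already refutes improvement. For the space-safe-up-to bounds I would show the worst extra material introduced before the next collectable state is exactly one constructor node for $(T,(cpcxT))$, giving the constant $1$, or the copied value $v$ for $(S,(cpS))$, giving $\size(v)$, matching Definition \ref{def:space-leak}(\ref{def:space-leak-1}). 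For the space leaks (cp), (cse), (soec) I would build, for every bound $b$, a reduction context that retains a structure of size exceeding $b$: (cp) copies an abstraction into a reduction position while the original binding stays live; (cse) shares a subexpression whose sharing keeps a long intermediate list reachable and thus uncollectable; and (soec) realizes the $(\texttt{head}~xs)~\texttt{eqBool}~(\texttt{last}~xs)$ versus swapped example, so that forcing \texttt{last} first pins the whole generated list in memory.

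The main obstacle will be the completeness and the size-annotation of the forking diagrams for the (case)-, (seq)- and (cp)-family, where the normal-order redex and the $T$-redex can overlap through the indirection chains $x_i=x_{i-1}$ and through the surface/top-context restrictions; producing a genuinely complete set, so that the induction never gets stuck, is the delicate bookkeeping. A second, subtler point is that space is a maximum over the whole sequence, so a merely step-local size comparison is insufficient: I must ensure the diagrams keep the two reductions synchronized peak for peak, and that interleaved gc steps, which are uncounted, do not silently shift a peak. As the paper notes, the typed setting of $\LRPgc$ keeps the diagram case analysis finite and small — in particular it prunes ill-typed overlaps for list-processing functions — which is what makes the uniform induction feasible.
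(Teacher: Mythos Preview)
Your proposal is correct and follows essentially the same approach as the paper: reduce to reduction (or top) contexts via the context lemmas, compute complete sets of forking diagrams between the transformation and $\nogc$-steps, and close by induction on the length of the $\nogc$-reduction; handle the negative and space-leak entries by explicit context families. The paper's proof is itself only a sketch with one worked example (cpx), deferring the full diagram catalog to the technical report, so your plan matches it in both method and level of detail; the one refinement worth noting is that in the worked (cpx) case the diagrams introduce auxiliary $(\gceq)$-steps on the transformed side, so your induction hypothesis must be robust to such extra size-preserving transformations appearing alongside the original one.
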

\begin{proof} Complete Proofs  for the space-safety can be found in  \cite{schmidt-schauss-dallmeyer:2017:frank-57},  
and sketches and remarks in the remainder of this section. 
%
%
As an example, we treat (cpx) in more detail:\\
%
%
%
{\bf Claim.} The transformation $(cpx)$ is a space-equivalence. \\
Due to the context lemma it is sufficient to check forking diagrams in top contexts, however, we permit that (cpx) may copy into arbitrary contexts.

An analysis of forking overlaps between LRPgc-reductions and $(cpx)$-transformations in top contexts shows that the following set of three diagrams is complete, where 
all concrete  (cpx)-transformations in a diagram copy from the same binding $x = y$: \\[-2mm]
%
\begin{tabular}{llll}
\begin{minipage}{0.2\textwidth}
\[\xymatrix@R=5mm@C=14mm{
   s \ar[r]_{\T,cpx} \ar[d]_{n,a} &  s'  \ar@{-->}[d]^{n,a}
     \\
   s_1\ar@{-->}[r]_{\T,cpx,*} & s_1'
}\\
\]
\end{minipage}
&
\begin{minipage}{0.35\textwidth}
\[\xymatrix@R=5mm@C=14mm{
   s \ar[rr]_{\T,cpx} \ar[dd]_{n,a\not=gc} &&  s'  \ar@{-->}[d]^{n,gc,0\vee 1}
     \\
      & &    s_1'  \ar@{-->}[d]^{n,a}
      \\
   s_2\ar@{-->}[r]_{\T,cpx,*} & \cdot \ar@{-->}[r]_{\T,\gceq,0\vee 1} & s_2'
}\\
\]
\end{minipage}
&
\begin{minipage}{0.35\textwidth}
\[\xymatrix@R=5mm@C=14mm{
   s \ar[rr]_{\T,cpx} \ar[dd]_{n,gc} &&  s'  \ar@{-->}[d]^{n,gc,0\vee 1}
     \\
      & &    s_1'  \ar@{-->}[d]^{n,gc}
      \\
   s_2\ar@{-->}[r]_{\T,cpx,*} & \cdot \ar@{-->}[r]_{\T,\gceq,0\vee 1} & s_2'
}\\
\]
\end{minipage}
\end{tabular}

We also need the diagram-property that  $s_1 \xleftarrow{n,a} s \xrightarrow{\T,\gceq} s'$ can be joined by  
$s_1 \xrightarrow{\T,\gceq,0\vee 1} s_1' \xleftarrow{n,a} s'$. 
We will apply the context lemma for space equivalence (Proposition \ref{lemma:context-sp-equiv}), which also holds for $\T$-contexts.\\
Let $s_0 \xrightarrow{cpx} t_0$, and let $s = T[s_0]$ and $s' = T[t_0]$. Then  $\size(s) = \size(s')$ as well as $\FV(s) = \FV(s')$. 
We have to show $\spmax(s) = \spmax(s')$, 
which can be shown by an induction on the number of LRPgc-reductions of $T[s_0]$.  
The claim to be proved by induction is sharpened:  in addition  the number of LRPgc-reductions of $T[s_0]$ is not greater than for   $T[t_0]$. \\
Since $(cpx)$ as well as $(\gceq)$ do not change the size, we have the same maximal space usage  for $s$ and $s'$.
An application of  the context lemma for top contexts and for space equivalence finishes the proof.
\end{proof}

Note that a majority of the reasoning and proofs is done  in the untyped calculi $\LR$ and $\LRgc$  (see \cite{schmidt-schauss-dallmeyer:2017:frank-57}). 

 We  investigate the space-properties of  (cp):  Used as transformation (cpS) in an $S$-context it increases max-space at most 
by $\size(v)$ where $v$ is the copied abstraction;  and in general the size-increase can be bounded by $(\rln(s)+2)*\size(v)$ 
where $s$ is the initial expression.
This enables very useful estimations of the effects of optimizing transformations w.r.t. their max-space-behavior for the transformations 
 mentioned in this paper, 
in particular for optimizations by partial evaluation.

 \begin{proposition}\label{prop:cpS-S-maxspace} The following estimations hold for (cp) and  (cpS), where 
 $s \xrightarrow{cp} t$, and   where $v$ is the  copied abstraction:
\begin{enumerate}
  \item  The transformation $s \xrightarrow{(S,cpS)} t$ increases max-space at most by $\size(v)$.  
  \item  The transformation $s \xrightarrow{cp} t$ increases max-space at most by $(\rln(s)+2)*\size(v)$, i.e.  
   $\spmax(t) \leq (\rln(s)+2)*\size(v) + \spmax(s)$.
\end{enumerate}
  \end{proposition}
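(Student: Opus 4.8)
The plan is to prove both bounds by exhibiting an explicit reduction sequence for $t$ that tracks the reduction sequence of $s$, and by estimating the size overhead that the copied abstraction $v$ introduces at each point. First I would recall that $(cp)$ copies a polymorphic abstraction $v$ from its binding into a target occurrence, and that by the context lemma (Lemma~\ref{lemma:context-sp-impr}) and its surface/top variants it suffices to analyze the behavior in reduction contexts, reducing the global claim to an analysis of how a single copy step interacts with the normal-order reduction. The central quantitative observation is that copying $v$ increases $\size$ by exactly $\size(v)$ at the copied occurrence; the whole difficulty is controlling how many such extra copies of $v$ are simultaneously present in any single intermediate expression along the evaluation, since $\spmax$ is the maximum of the sizes over the reduction sequence.

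For part~(1), the transformation $s \xrightarrow{(S,cpS)} t$ copies $v$ into a surface context $S$. I would argue that in a surface position the copied occurrence of $v$ behaves like the original binding under normal-order reduction: a forking-diagram analysis between $(cpS)$ and the LRPgc-reduction steps shows that the copy is consumed (by an (lbeta) or (seq-c)/(case-c) step) at essentially the same stage as in the evaluation of $s$, so at every intermediate expression at most one extra copy of $v$ is present beyond what the evaluation of $s$ already carries. Hence for every reduction state of $t$ the size exceeds the corresponding state of $s$ by at most $\size(v)$, and taking maxima over the two (suitably aligned) reduction sequences gives $\spmax(t) \leq \size(v) + \spmax(s)$. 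Here I would lean on $s \sim_c t$ (correctness of $(cp)$, Theorem~\ref{thm:length-facts}(2)) to guarantee the two reductions both converge and can be put in correspondence.

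For part~(2), the copy target is an arbitrary (not necessarily surface) context, so $v$ may be copied into an abstraction body; that copy can then be duplicated repeatedly as the enclosing abstraction is itself copied and applied during the $\rln(s)$ reduction steps. The plan is to bound the number of live copies of $v$ that can accumulate in one intermediate expression by the number of reduction steps that can trigger such duplication, namely the (lbeta), (case), and (seq) steps counted by $\rln(s)$, plus a small additive slack of $2$ to account for the initial copy and the final WHNF. Each such copy contributes at most $\size(v)$ to the size, giving the total overhead $(\rln(s)+2)\cdot\size(v)$ and hence $\spmax(t) \leq (\rln(s)+2)\cdot\size(v) + \spmax(s)$; note that $\rln(s)$ is well defined and finite because $s\maycon$.

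The main obstacle I expect is the duplication-counting in part~(2): making rigorous the claim that the number of co-existing copies of $v$ in any single intermediate term is bounded by $\rln(s)+2$. The risk is that a naive diagram-chase permits copies to proliferate faster than the reduction steps create them, because $(cp)$ into an abstraction can interact with subsequent sharing via (cp-in)/(cp-e) copying the enclosing abstraction as well. I would address this by setting up an invariant that assigns to each live copy of $v$ a distinct reduction step of $s$ that is ``responsible'' for its creation, and arguing via the forking diagrams that no reduction step can be charged twice; the $+2$ absorbs the boundary cases (the copy present before any reduction, and any copy surviving into the WHNF). The surface-context restriction in part~(1) is precisely what forbids this proliferation, which is why the two bounds differ so sharply.
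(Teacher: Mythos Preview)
The paper does not prove this proposition in the main text; it is deferred to the technical report~\cite{schmidt-schauss-dallmeyer:2017:frank-57}. I can therefore only compare your plan against the paper's declared methodology (complete sets of forking diagrams plus induction on reduction length, as in the proof of Theorem~\ref{main-theorem}) and check its internal soundness.

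Your overall architecture---run the reductions of $s$ and $t$ in parallel and bound the number of simultaneously live extra copies of $v$---is the right one and matches the paper's diagram-based style. Two substantive points need repair.

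For part~(1), the reason you give for ``at most one extra copy'' is not correct. You argue the extra $v$ is \emph{consumed} at essentially the same stage as in the evaluation of $s$. That fails whenever the target of (cpS) is a surface position that is never demanded (e.g.\ an unforced constructor argument): then the extra $v$ survives to the WHNF and is never consumed. The correct invariant is structural, not temporal: since both the outer context $S$ and the target context in (cpS) are surface, the extra $v$ sits at a surface position of $R[t]$, hence not under any $\lambda$. No normal-order rule moves a subterm \emph{into} an abstraction body, and the only duplicating rule, (cp), copies $\lambda$-bindings---which therefore never contain the extra $v$. That is why the count stays at one, and that is what the forking diagrams for $(S,cpS)$ actually encode (at most one pending $(cpS)$ along the bottom edge).

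For part~(2), your charging scheme targets the wrong rule. Duplication of the extra $v$ is effected by normal-order (cp) steps, which are \emph{not} counted by $\rln$. The bridge you need is that in $\LRPgc$ each normal-order (cp) places an abstraction at a demanded position where the next essential step is an (lbeta), so the number of normal-order (cp) steps along the evaluation of $s$ is at most $\rln(s)+1$ (the $+1$ for a possible final abstraction-WHNF). Together with the initial single extra copy this yields at most $\rln(s)+2$ simultaneous extra copies, hence the bound. Your ``assign each live copy a distinct reduction step'' plan becomes sound once routed through this (cp)$\,\le\,$(lbeta)$\,+\,1$ inequality; without it the appearance of $\rln(s)$ rather than a (cp)-count is unmotivated.

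A minor remark: invoking the context lemma here is unnecessary. ``Space-safe up to $c$'' (Definition~\ref{def:space-leak}) is already quantified over reduction contexts only, so there is nothing to reduce from general contexts; Lemma~\ref{lemma:context-sp-impr} concerns the different relation $\leq_{\spmax}$.
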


 %
  A consequence is that the space usage of several transformations $\xrightarrow{(S,cpS)}$ that are  space-safe up to  the additive constant $c$ 
  can be estimated:
  \begin{corollary}\label{thm:cpS-S-n-maxspace}
 Let $t$ be an expression. 
 If $t$ is transformed into $t'$ by an arbitrary number of space improvements that do not increase the size of abstractions, 
   including at most $n$ transformations that increase max-space by at most $c_i$ for $i = 1,\ldots,n$, and also by $m$ transformations 
  $\xrightarrow{(S,cpS)}$, then $\spmax(t') \le \spmax(t) +  (\sum c_i) + m\cdot V$, where $V$ is the maximum of the size of abstractions 
  in $t$.
  \end{corollary}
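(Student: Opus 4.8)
The plan is to prove the bound by telescoping along the transformation sequence, turning each individual step into an additive contribution to $\spmax$ and then showing that every contribution is controlled by the claimed constants. First I would write the given derivation as a chain $t = u_0 \to u_1 \to \cdots \to u_k = t'$ of single transformation steps. If $\spmax(t) = \infty$ (equivalently $t$ diverges) the inequality holds trivially, so I may assume $t\maycon$; since every step is a correct transformation and hence preserves $\sim_c$, all $u_j$ are contextually equivalent to $t$ and therefore converge, so each $\spmax(u_j)$ is finite and well defined on the closed expression $u_j$. I would then classify each step $u_j \to u_{j+1}$ into three kinds and bound the one-step increase $\spmax(u_{j+1}) - \spmax(u_j)$: (i) a plain space improvement gives $\spmax(u_{j+1}) \le \spmax(u_j)$, because $\leq_{\spmax}$ is a precongruence so the step yields $u_{j+1} \leq_{\spmax} u_j$, and evaluating at the empty context (legitimate since $u_j$ is closed) gives the bound; (ii) one of the $n$ exceptional steps contributes at most $c_i$ by hypothesis; (iii) an $\xrightarrow{(S,cpS)}$ step contributes at most $\size(v_j)$, where $v_j$ is the abstraction copied at that step, by Proposition~\ref{prop:cpS-S-maxspace}(1) applied to the closed expression $u_j$.

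The crucial second step is to replace the step-dependent bound $\size(v_j)$ by the uniform constant $V$. For this I would establish the invariant that the maximal size of any abstraction occurring in $u_j$ is at most $V$ for every $j$, proved by induction on $j$: the base case $j=0$ is exactly the definition of $V$ as the maximum size of abstractions in $t$, and the induction step uses the hypothesis that none of the applied transformations increases the size of an abstraction --- the space improvements by assumption, the $\xrightarrow{(S,cpS)}$ steps because they merely duplicate an already present abstraction, and likewise for the $n$ exceptional steps. Since the copied abstraction $v_j$ occurs in $u_j$, the invariant yields $\size(v_j) \le V$, so each of the $m$ copy steps contributes at most $V$.

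Summing the one-step bounds telescopes, and since only the $n$ exceptional steps and the $m$ copy steps contribute positively, we obtain
\[
\spmax(t') = \spmax(u_k) \le \spmax(u_0) + \sum_{i=1}^{n} c_i + m\cdot V = \spmax(t) + \Bigl(\sum_i c_i\Bigr) + m\cdot V,
\]
which is the claim. The main obstacle is not the arithmetic but the justification of the abstraction-size invariant together with the composability of the per-step bounds: I must make sure that applying the one-step estimates to the intermediate closed expressions $u_j$ is legitimate (which needs closedness and convergence of every $u_j$, handled above) and that the phrase \emph{does not increase the size of abstractions} is read as a property preserved throughout the whole sequence, so that the bound of Proposition~\ref{prop:cpS-S-maxspace}(1) can be capped uniformly by $V$ rather than by the possibly different sizes of abstractions appearing in later intermediate expressions.
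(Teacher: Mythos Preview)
Your proposal is correct and follows the same line as the paper's proof, only spelled out in far more detail: the paper merely states that the bound follows from Proposition~\ref{prop:cpS-S-maxspace} together with the fact that $\xrightarrow{(S,cpS)}$ does not increase the size of abstractions, which is exactly your telescoping argument plus the abstraction-size invariant made explicit.
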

  \begin{proof}  
  This follows from Proposition \ref{prop:cpS-S-maxspace}   
  and since $\xrightarrow{(S,cpS)}$ does not increase the size of abstractions.  
 \end{proof} 
 
 \begin{remark}  
 Using (cp) as transformation  with general contexts for the target, for example copying into an abstraction, 
 may induce a space-leak, but see Proposition \ref{prop:cpS-S-maxspace}.     
 More exactly,  the max-space of a reduction sequence may increase linearly with the number of reduction steps, 
 and exponentially with the number of applications of the (cp)-transformation.  Examples of this behavior can be constructed as in 
  Example \ref{example:maxspace-unbounded}.
 
 Note that there are instances of (cp) that behave much better, for example versions of inlining (see below in Section \ref{sec:inlining}),
 or  if the copied abstraction can be garbage collected after (cp) or transformed further,  
 and also  the special case of (ucp)-transformations. 
 
 \end{remark}

 \subsection{Specific Examples and Comparison with Previous Work}\label{sec:specific}
  Now we explain several examples and compare with related work.
 
 \begin{example}\label{example:maxspace-unbounded} 
 We show that common subexpression elimination is a space leak.  
 We reuse an example which is similar to the example in  \cite{bakewell-runciman-ppdp:2000}.
 The expression is given in a Haskell-like notation, using integers, but can also be defined in $\LRPgc$:
 $s :=  \tif~(\texttt{last}~ [1..n]) > 0\ \tthen\ [1..n]\ \telse\ \tnil$, where
  $[1..n]$ is the expression that lazily generates a list $[1,\ldots,n]$. 
 The evaluation of $s$ expands the list  until the last element is generated and then 
 evaluates the same expression to obtain $1:[2..n]$. Due to eager garbage collection, it is not hard to see that the evaluation sequence
 requires constant max-space, independent of $n$ (assuming constant space for integers). Note that this evaluation will also generate 
   indirection chains of the form $\ldots, x_1 = x_2, x_2 = x_3, \ldots$, which are ignored by our space measure.  
 As shown in  \cite{dallmeyer-schmidtschauss:16} an evaluation on an abstract machine will really use constant space,
 if shortening indirections is performed by the abstract machine. 

 Now let $s' = \tletr~x = [1..n]~\tin\ \tif ~(\texttt{last}~x) > 0\  \tthen\ ~x\ \telse\  \tnil$. The evaluation of $s'$ behaves different to $s$: it first evaluates the list, and stores it 
 in full length, and then the second expression will be evaluated with an already evaluated list.
 The size required is a linear function in $n$. Seen from a complexity point of view, there is no real bound on this max-space increase: 
 the example can be adapted using any computable function $f$ on $n$ by modifying the list to $[1..f(n)]$.
 Obviously this example is a  space leak according to our definition, where the reduction contexts contains the list definition.
 \end{example}

%
There may be instances of common subexpression elimination which are not space leaks, however,  
    we leave the development of corresponding analyzes for future research.   
%

 The example and arguments in  \cite{bakewell-runciman-ppdp:2000} show that correctly changing the sequence of evaluations
 may be a transformation that is a space leak: this means that (soec) is classified as a space leak. 
 
 
%

%
%
%


\ignore{
The following may be useful if backward transformations enable other optimizations: 
\begin{proposition}
The transformation $t \to s$, where $s \xrightarrow{S,a} t$ where $a$ is a reduction rule from Fig.\,\ref{fig:basred} that is not a (cp)
 is space-safe up
to the constant $c = \spmax(s) - \spmax(t)$.   
\end{proposition}
}

\begin{example}
  An example that illustrates the definitions and may contribute to the discussion on  the boundaries between space-safe and -unsafe transformation is the following:
  Let $s =  \ttrue$ and $t = (id~\ttrue)$. Then clearly  $s \leq_{spmax} t$, and the transformation $s \to t$ is  space-safe.
   Let $s' = \lambda x. s$ and $t' = \lambda x. t$. Then $s' \to t'$  is a space leak according to our definition:\\
   Let $R = (\tletr~ y = [\cdot], z = r_n~\tin~({\tt and}~z)~ \&\&~ ({\tt last}~z))$, where $r_n$ is the list $[(y~0),\ldots,(y~0)]$ of length $n$, 
   {\tt and} is the 
   function that computes the logical conjunction of all list entries, and $\&\&$ is the logical conjunction. 
   Then  the difference  $\spmax(R[t])-\spmax(R[s])$
   is a linear function in $n$ that exceeds all bounds, hence $s' \to t'$ is a space leak.  
\end{example}

  {\bf Associativity of append.}\ \   In \cite{gustavsson-sands:01}, the re-bracketing of  $((xs~ \texttt{++}~ys)~\texttt{++}~zs)$ was analyzed, and the results
 had to use several variants of their improvement orderings; in particular their observation of stack  and heap space made the 
 analysis rather complex.
 We got results that are easier to obtain and to grasp due to our relaxed  measure of space.
  
 Our analysis of applying the associative law to the recursively defined append function $\texttt{++}$ shows that 
 $((xs~ \texttt{++}~ys)~\texttt{++}~zs)\ge_{spmax} (xs~ \texttt{++}~(ys~\texttt{++}~zs))$, where $xs, ys, zs$ are variables. 
 We know that  the two expressions are contextually equivalent.  The proof uses the context lemma for space improvement 
 and in particular the space-equivalence of (ucp) which allows to inline  uniquely used bindings, and 
   an induction argument. The exact analysis shows that within reduction contexts, 
   which exactly enforce the evaluation of the spine of the lists like $(\texttt{last}~[\cdot|)$, 
   the $\spmax$-difference is exactly $4$. However, for example in a reduction context $(\tseq~(\texttt{last}~[\cdot])~s)$, where 
   the evaluation of $s$ requires (much) more space than the expression $(\texttt{last}~[s])$, there is no max-space difference, 
   since we analyze the maximally used space. 
   The general estimation is that in reduction contexts $R$, we have 
    $\spmax(R[((xs~ \texttt{++}~ys)~\texttt{++}~zs)]) \leq 4+\spmax(R[(xs~ \texttt{++}~(ys~\texttt{++}~zs))])$.

 For the {\bf three sum-of-list-examples}  in \cite{gustavsson-sands:01}, the analysis using our size-measure results in comparable conclusions:
 they compare three functions: a plain recursively defined $\texttt{sum}$ of a list of numbers,   
 the tail-recursive function $\texttt{sum}'$ with a non-strictly used accumulator  and the tail-recursive  $\texttt{sum}''$ with a strictly used accumulator for the result.
 
  $\texttt{sum}$  requires space linear in the length of the list, and the same holds for   $\texttt{sum}'$.  
 However, \texttt{sum},  $\texttt{sum}'$  and $\texttt{sum}''$ as functions are not related by any improvement relation due to the change in
 the evaluation order of the spine and elements of the argument list, in case the list is not completely evaluated. 
 In the latter case transforming one into the other may indeed be a space-leak, independent of the length of the list since it would be an instance of (soec). 
 
 {\bf (weak-value-beta) in Fig.\ 2 in \cite{gustavsson-sands:99}}:  As a further comparison we check and compare our results (see  Proposition \ref{prop:cpS-S-maxspace})
  with  those for weak improvement in Fig.\ 2 in \cite{gustavsson-sands:99}: 
 the claim on (weak-value-beta) there appears to be  practically almost useless, (at least for a special case): copying once indeed can only increase the space by a 
 linear function in the size of the program (and as parameter the number of reductions in our formulation (see Prop.\ \ref{prop:cpS-S-maxspace})),
  even copying into an abstraction is permitted. However, repeating (weak-value-beta) $n$-times may increase the program exponentially (in $n$)
 by repeated doubling. The transformation rule in \cite{gustavsson-sands:99} permits $\tletr~x = V[x] ~\tin~C[x] \to \tletr~x = V[V[x]] ~\tin~C[V[x]]$ 
 $\to \tletr~x = V[V[V[V[x]]]] ~\tin~C[V[V[V[x]]]]$, where $V$ is a value as context. 
 Hence, a sequence of  several weak improvement steps is not space-safe in the intuitive sense.
 According to our definition it is a space leak for this particular example.  
 
 Our foundations allow to improve the claims on the space-properties of the two last let-shuffling rules of \cite{gustavsson-sands:99}, 
 which are (strong) space-improvements w.r.t.\  our measure and definitions,
 since we have proved  that (lll) is a space equivalence. 
 {\bf Typed Transformations}\ \ 
The rule (caseId) is also the heart of other type-dependent transformations,  
which are also only correct under typing, and is a space-improvement. Examples of more general transformations of a similar kind are:  
 $(\texttt{map}~\lambda x.x) \to \texttt{id}$, $\texttt{filter}~(\lambda x.\ttrue) \to \texttt{id}$, and  $\texttt{foldr}~(:)~[] \to \texttt{id}$,
where we refer to the 
usual Haskell-functions and constructors. Note that these transformations are not correct in the untyped calculi.
 
\begin{figure*}[t] \small
\fbox{
%
   \begin{minipage}{.98\textwidth}
  Without inlining:\\
  $\begin{array}{ll}
  \foldl{} &= \lambda f,z,xs.\tcase~xs~\tof~\{([]~\casepf~z)~((y:ys)~\casepf~\foldl~f~(f~z~y)~ys)\}\\
  \foldls{} &= \lambda f,z,xs.\tcase~xs~\tof~\{([]~\casepf~z)~((y:ys)~\casepf~\tlet~w = (f~z~y)~\tin~\tseq~w~(\foldls~f~w~ys))\}\\
  \foldr{} &= \lambda f,z,xs.\tcase~xs~\tof~\{([]~\casepf~z)~((y:ys)~\casepf~f\ y\ (\foldr~f~z~ys))\}
  \end{array}$
  \\With inlining using \texttt{xor} as function:\\
  $\begin{array}{ll}
  \texttt{xor} &= \lambda x,y.\tcase~x~\tof~\{(\texttt{True}~\casepf~\tcase~y~\tof~\{(\texttt{True}~\casepf~\texttt{False})~(\texttt{False}~\casepf~\texttt{True})\})~(\texttt{False}~\casepf~y)\}\\
  \foldl{} &= \lambda f,z,xs.\tcase~xs~\tof~\{([]~\casepf~z)~\\
     & \hspace*{34.5mm} ((y:ys)~\casepf~\foldl~f~(\tcase~z~\tof\,\{(\texttt{True}~\casepf~\tcase~y~\tof\,\{(\texttt{True}~\casepf~\texttt{False})\\
     & \hspace*{114.5mm}(\texttt{False}~\casepf~\texttt{True})\})\\
     & \hspace*{83mm}(\texttt{False}~\casepf~y)\})~ ys)\}

  \end{array}$
  \\[1mm]
  \hphantom{a}The inlining of $\foldls{}$ and $\foldr{}$ is analogous to the inlining of $\foldl{}$.

  \caption{Definitions of $\foldl{}$, $\foldls{}$ and $\foldr{}$}\label{fig:fold-def}
  \end{minipage}
}
\end{figure*}
 
 {\bf Translating into Machine Language}\ \ 
 An efficient implementation of the evaluation of programs or program expressions first translates expressions
 into a machine format that can be executed by an abstract machine.  
 We consider a translation into a variant of the Sestoft machine \cite{sestoft:97,moran-sands:99} extended by (eager) garbage collection. 
The translation $\psi$  into machine expressions (see also \cite{schmidtschauss-sabel-PPDP:2015})
in particular translates   $\psi(s\ t)$  to   $\tletr~ y=\psi(t)~\tin~ (\psi(s)\ y)$, which is the same as an inverse (ucp). 
Our results, in particular the results on (ucp) (Thm.\ \ref{main-theorem}), show that the complete translation $\psi$  is a space-equivalence.
Note that the abstract machine uses extra data structures for evaluation.

 \begin{figure*}[t] \small
\fbox{
\begin{minipage}{0.9\textwidth}
\arraycolsep=7pt
$
\begin{array}{l|rrrrrrrrrr}
k & 100 & 200 & 300 & 400 & 500 & 600 & 700 & 800 & 900 & 1000\\
\hline\hline
 & \multicolumn{10}{l}{\texttt{foldl False xor (take k lst)}}\\
\hline
\rln     & 1214 & 2414 & 3614 & 4814 & 6014 & 7214 & 8414 & 9614 & 10814 & 12014\\
\spmax{} & 825 & 1625 & 2425 & 3225 & 4025 & 4825 & 5625 & 6425 & 7225 & 8025\\
\hline\hline
 & \multicolumn{10}{l}{\text{after inlining:}}\\
\hline
\rln     & 1012 & 2012 & 3012 & 4012 & 5012 & 6012 & 7012 & 8012 & 9012 & 10012\\
\spmax{} & 882 & 1682 & 2482 & 3282 & 4082 & 4882 & 5682 & 6482 & 7282 & 8082\\
\hline\hline
 & \multicolumn{10}{l}{\foldls \texttt{ False xor (take k lst)}}\\
\hline
\rln     & 1315 & 2615 & 3915 & 5215 & 6515 & 7815 & 9115 & 10415 & 11715 & 13015\\
\spmax{} & 63 & 63 & 63 & 63 & 63 & 63 & 63 & 63 & 63 & 63\\
\hline\hline
 & \multicolumn{10}{l}{\text{after inlining:}}\\
\hline
\rln     & 1113 & 2213 & 3313 & 4413 & 5513 & 6613 & 7713 & 8813 & 9913 & 11013\\
\spmax{} & 75 & 75 & 75 & 75 & 75 & 75 & 75 & 75 & 75 & 75\\
\hline\hline
 & \multicolumn{10}{l}{\texttt{foldr False xor (take k lst)}}\\
\hline
\rln     & 1115 & 2215 & 3315 & 4415 & 5515 & 6615 & 7715 & 8815 & 9915 & 11015\\
\spmax{} & 66 & 66 & 66 & 66 & 66 & 66 & 66 & 66 & 66 & 66\\
\hline
 & \multicolumn{10}{l}{\text{after inlining:}}\\
\hline
\rln     & 913 & 1813 & 2713 & 3613 & 4513 & 5413 & 6313 & 7213 & 8113 & 9013\\
\spmax{} & 84 & 84 & 84 & 84 & 84 & 84 & 84 & 84 & 84 & 84\\
\end{array}
$
\end{minipage}
}
\caption{Experimental Results }\label{fig:fold-experiments}
\end{figure*}

 \section{Experimental Analysis of Inlining and \texttt{fold} Using the Tool LRPi}\label{sec:inlining}
We use our interpreter LRPi that executes $\LRPgc$-programs using an abstract machine approach and measures the runtime and space behavior 
(for more details see \cite{dallmeyer-schmidtschauss:16}) and apply it to several fold-variants. 
The correctness of the space measurement of the abstract machine is described in   
 \cite{sabel-schmidt-schauss:2015:frank-55} (see Thm.\, \ref{main-theorem}). 

We analyze the space behavior of \texttt{fold} using exclusive-or as function, \texttt{False} as neutral element and a list \texttt{lst} starting with a 
single \texttt{True} followed by $k-1$ \texttt{False}-elements generated using a take-function/list generator approach. 
We already compared the three fold variants concerning runtime and space consumption with each other in this scenario in \cite{dallmeyer-schmidtschauss:16}, 
but now we focus on the impact of inlining, i.e. how does inlining affect the space consumption in the same scenario?

The current version of LRPi uses Peano-encoding for positive integers, but treats arbitrary Peano numbers as of size $1$, 
which makes it is easier to analyze the results. Hence the current statistics
 differs from that in 
\cite{dallmeyer-schmidtschauss:16}.   
The \texttt{fold}-functions are defined in Fig.~\ref{fig:fold-def}. 
Inlining copies the defining lambda-expression for
 \texttt{xor} to a call site and then applies (lbeta), (ucp), (cpx), (gc) perhaps several times to obtain the inlined definitions in Fig.\ \ref{fig:fold-def}.
 In order to keep the experiment simple and interpretable, we omit further obvious optimizations. 
 Since inlining copies into an abstraction (in addition into a recursive definition), our theoretical results do not give good guarantees on the space behavior
 and also do not preclude that the transformation might be a space leak.
 Here further research is needed.

 Fig.\ \ref{fig:fold-experiments}  uses  Haskell-notation where $k$ is the length of the input list, \texttt{rln} is the 
runtime measure, i.e. the number of (essential) reduction steps (see Def.\ \ref{def:rln}), $\spmax$ is our space measure (see Def.\ \ref{def:of-spmax}). 
 
For \texttt{foldl} the runtime decreases linearly after inlining, since this decreases the number of reduction steps by a constant for 
each list element. In contrast the inlining increases the space consumption by a constant.
Space consumption is linear in the length of the input list, which  is caused by the left-associativity of $\foldl{}$ since we get 
a linear number of nested \texttt{case}-expressions caused by the \texttt{xor}. 
The constant increase of space consumption after inlining is caused by the constant additional space that is needed by the inlined \texttt{xor}-function.

If we use the strict variant of \texttt{foldl} (i.e. $\foldls$), then the accumulator is evaluated each time and therefore no nested \texttt{case}-expressions are constructed. 
As expected we see that the space consumption is constant and that the runtime again improves linearly.

For \texttt{foldr}  the inlining improves the runtime linearly. Moreover the space consumption also only increases by a constant (similar to the
  $\foldls$-variant).  
The reason  is the right-associativity of \texttt{foldr}: since \texttt{xor} is strict in the first argument, \texttt{foldr} runs over the whole list, 
but depending on the left argument, \texttt{xor} either evaluates the second argument or returns the argument. Since the list is lazily generated 
and contains only \texttt{False}-elements (up to one occurrence), each element gets directly generated and consumed and therefore only constant space is needed.

The example suggests that it is a good idea to invest (a bit of) space for time, since for \texttt{foldl'} and \texttt{foldr} the runtime improves linearly while the space consumption only 
increases by a constant.
Our experiment shows a nice behavior in the considered empty context, but does not show the behavior in other contexts, or other uses of the functions. 


\section{Conclusion and Future Research}

We successfully derived results on the space behavior of transformations in lazy functional languages, by defining a relaxed space measure and 
reasoning about space improvements and space equivalences. We developed and justified a criterion for classifying transformations  as space safe or space leaks.
An impact of our results could be a controlled runtime optimization during compile time by applying time-improving transformations, but taking into account the knowledge
of their impact on the max-space usage. We contributed by detailing and refining this knowledge for call-by-need functional languages.

Future work is to extend the analysis of transformations  to larger and more complex transformations in the polymorphic typed setting.
A generalisation of top contexts to surface contexts is also of value for 
several transformations. 
To develop methods and justifications for space-improvements involving recursive definitions is left for future work, as well as 
the exploration of  the use of space-ticks as in \cite{gustavsson-sands:01} to improve the computation of estimations.
\paragraph*{Acknowledgements}

We thank David Sabel for lots of discussions and helpful hints,    and reviewers for their constructive comments.

\bibliographystyle{eptcs}

\end{document}